\title{Reductions for Automated Hypersafety Verification}
\author{Azadeh Farzan}
\affiliation{\institution{University of Toronto}}
\author{Anthony Vandikas}
\affiliation{\institution{University of Toronto}}
\def\@proofindent{\noindent}
\renewcommand{\state}{\mathcal{S}t}
\newcommand{\assert}{\mathcal{A}}
\newcommand{\stmt}{\Sigma}
\newcommand{\bool}{\mathbb{B}}
\newcommand{\pow}[1]{{\mathcal{P}(#1)}}
\newcommand{\lang}[2][]{\mathcal{L}_{#1}(#2)}
\newcommand{\sem}[1]{\llbracket#1\rrbracket}
\newcommand{\floor}[1]{{\lfloor#1\rfloor}}
\newcommand{\linear}[1]{\mathcal{L}in(#1)}
\newcommand{\partition}[1]{\mathcal{P}art(#1)}
\newcommand{\sleep}{\mathrm{sleep}}
\newcommand{\ignore}{\mathrm{ignore}}
\newcommand{\reduce}{\mathrm{reduce}}
\newcommand{\inactive}{\mathrm{inactive}}
\newcommand{\tool}{{\sc Weaver}~}
 \def\azadeh#1{}
 \def\anthony#1{}
\begin{document}

\maketitle

\begin{abstract}
We propose an automated verification technique for hypersafety properties, which express sets of valid interrelations between multiple finite runs of a program. The key observation is that constructing a proof for a small representative set of the runs of the product program (i.e. the product of the several copies of the program by itself), called a {\em reduction}, is sufficient to formally prove the hypersafety property about the program. We propose an algorithm based on a counterexample-guided refinement loop that simultaneously searches for a reduction and a proof of the correctness for the reduction. We demonstrate that our tool \tool is very effective in verifying a diverse array of hypersafety properties for a diverse class of input programs.
\end{abstract}


\section{Introduction}\label{sec:intro}
A hypersafety property describes the set of valid interrelations between multiple finite runs of a program. A $k$-safety property \cite{ClarksonS08} is a program safety property whose violation is witnessed by at least $k$ finite runs of a program.  Determinism is an example of such a property: non-determinism can only be witnessed by two runs of the program on the same input which produce two different outputs. This makes determinism an instance of a 2-safety property.

The vast majority of existing program verification methodologies are geared towards verifying standard (1-)safety properties. This paper proposes an approach to automatically reduce verification of $k$-safety to verification of 1-safety, and hence a way to leverage existing safety verification techniques for hypersafety verification. The most straightforward way to do this is via \emph{self-composition} \cite{BartheDR11}, where verification is performed on $k$ memory-disjoint copies of the program, sequentially composed one after another. Unfortunately, the proofs in these cases are often very verbose, since the full functionality of each copy has to be captured by the proof. Moreover, when it comes to automated verification, the invariants required to verify such programs are often well beyond the capabilities of modern solvers \cite{TerauchiA05} even for very simple programs and properties.

The more practical approach, which is typically used in manual or automated proofs of such properties, is to compose $k$ memory-disjoint copies of the program \emph{in parallel} (instead of in sequence), and then verify some \emph{reduced} program obtained by removing redundant traces from the program formed in the previous step.
This parallel product program can have many such reductions.  For example, the program formed from sequential self-composition is one such reduction of the parallel product program. Therefore, care must be taken to choose a ``good'' reduction that {\em admits a simple proof}. Many existing approaches limit themselves to a narrow class of reductions, such as the one where each copy of the program executes in lockstep \cite{BartheCK11,EilersMH18,SousaD16}, or define a general class of reductions, but do not provide algorithms with guarantees of covering the entire class \cite{BartheCK13,SousaD16}.

We propose a solution that combines the search for a safety proof with the search for an appropriate reduction, in a counterexample-based refinement loop. Instead of settling on a single reduction in advance, we try to verify the entire (possibly infinite) set of reductions simultaneously and terminate as soon as some reduction is successfully verified. If the proof is not currently strong enough to cover at least one of the represented program reductions, then an appropriate set of counterexamples are generated that guarantee progress towards a proof.

Our solution is language-theoretic. We propose a way to represent sets of reductions using infinite tree automata. The standard safety proofs are also represented using the same automata, which have the desired closure properties. This allows us to check if a candidate proof is in fact a proof for one of the represented program reductions, with reasonable efficiency.

Our approach is not uniquely applicable to hypersafety properties of sequential programs. Our proposed set of reductions naturally work well for concurrent programs, and can be viewed in the spirit of reduction-based methods such as those proposed in \cite{ElmasQT09,PopeeaRW14}. This makes our approach particularly appealing when it comes to verification of hypersafety properties of concurrent programs, for example, proving that a concurrent program is deterministic.  The parallel composition for hypersafety verification mentioned above and the parallel composition of threads inside the multi-threaded program are treated in a uniform way by our proof construction and checking algorithms. In summary:
\begin{itemize}
\item We present a counterexample-guided refinement loop that simultaneously searches for a proof and a program reduction in Section \ref{sec:algorithm}. This refinement loop relies on an efficient algorithm for proof checking based on the antichain method of \cite{WulfDHR06}, and strong theoretical progress guarantees.
\item We propose an automata-based approach to representing a class of program reductions for k-safety verification. In Section \ref{sec:lta} we describe the precise class of automata we use and show how their use leads to an effective proof checking algorithm incorporated in our refinement loop.
\item We demonstrate the efficacy of our approach in proving hypersafety properties of sequential and concurrent benchmarks in Section \ref{sec:eval}.
\end{itemize}


\section{Illustrative Example}\label{sec:example}

\let\oldnl\nl
\newcommand{\nonl}{\renewcommand{\nl}{\let\nl\oldnl}}

We use a simple program \textsc{Mult}, that computes the product of two non-negative integers, to illustrate the challenges of verifying hypersafety properties and the type of proof that our approach targets. Consider the multiplication program in Figure \ref{fig:exm1}(i), and assume we want to prove that it is distributive over addition.

\begin{figure}[t]
\begin{center}
\includegraphics[scale=0.22]{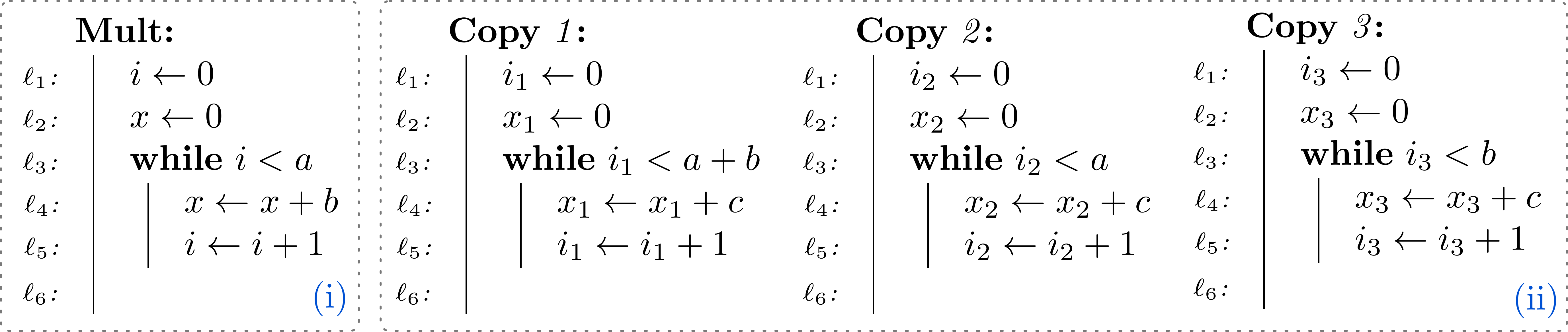}
\caption{Program \textsc{Mult} (i) and the parallel composition of three copies of it (ii).   \label{fig:exm1}} 
\end{center}
\end{figure}

In Figure \ref{fig:exm1} (ii), the parallel composition of \textsc{Mult} with two copies of itself is illustrated. The product program is formed for the purpose of proving distributivity, which can be encoded through the postcondition $x_1 = x_2 + x_3$. Since $a$, $b$, and $c$ are not modified in the program, the same variables are used across all copies. One way to prove \textsc{Mult} is distributive is to come up with an inductive invariant $\phi_{ijk}$ for each location in the product program, represented by a triple of program locations $(\ell_i, \ell_j, \ell_k)$,   such that $\mathit{true} \implies \phi_{111}$ and $\phi_{666} \implies x_1 = x_2 + x_3$. The main difficulty lies in finding assignments for locations such as $\phi_{611}$ that are points in the execution of the program where one thread has finished executing and the next one is starting. For example, at $(\ell_6, \ell_1, \ell_1)$ we need the assignment $\phi_{611} \gets x_1 = (a + b) * c$ which is non-linear. However, the program given in Figure \ref{fig:exm1}(ii) can be verified with simpler (linear) reasoning.


\begin{wrapfigure}{r}{0.28\textwidth}\vspace{-15pt}
  \begin{center}
  \includegraphics[scale=0.25]{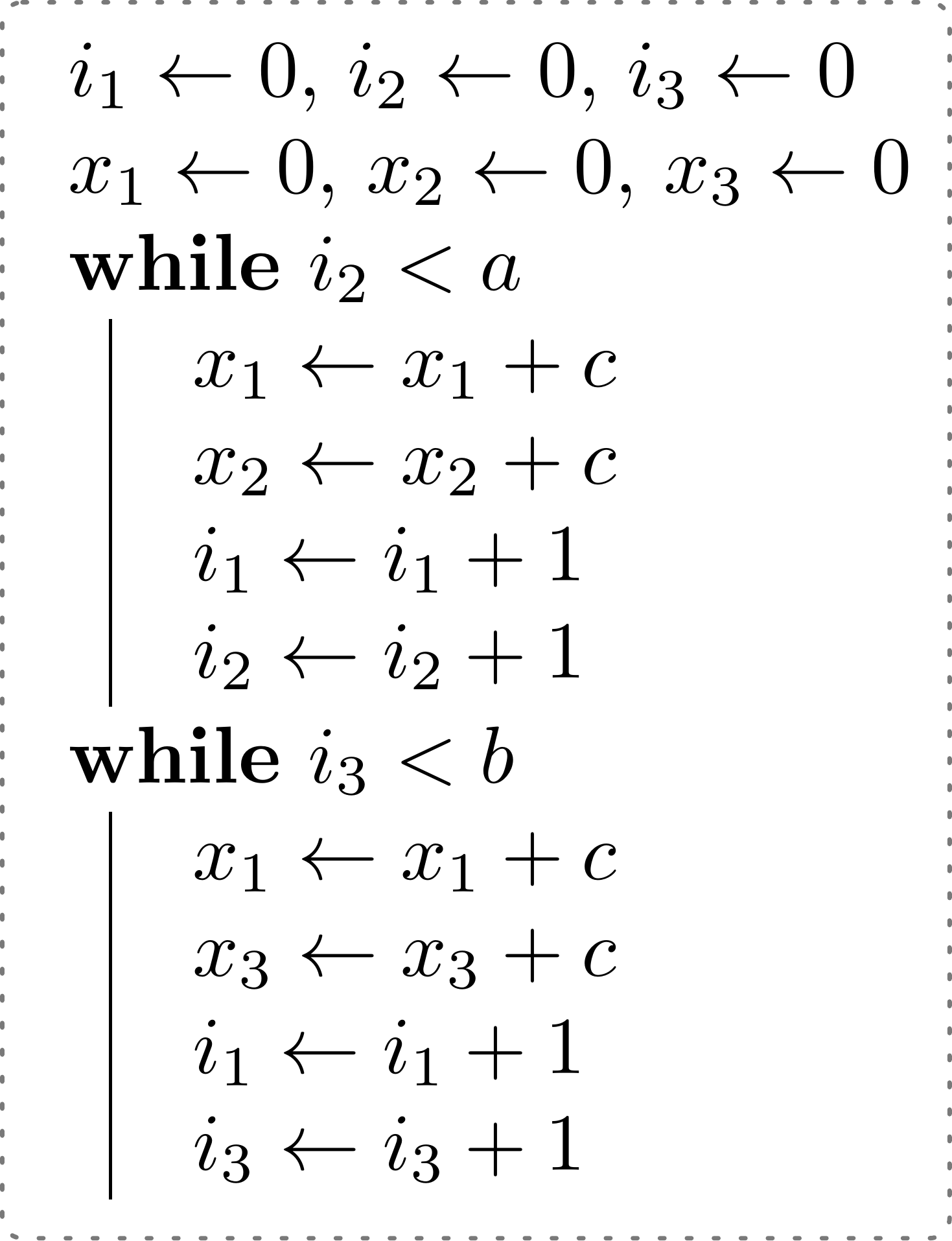}\vspace{-15pt}
  \end{center}
\end{wrapfigure}
The program on the right is a semantically equivalent \emph{reduction} of the full composition of Figure \ref{fig:exm1}(ii). Consider the program $P$ = (Copy 1 $||$ (Copy 2; Copy 3)). The program on the right is equivalent to a lockstep execution of the two parallel components of $P$. The validity of this reduction is derived from the fact that the statements in each thread are \emph{independent} of the statements in the other. That is, reordering the statements of different threads in an execution leads to an equivalent execution. It is easy to see that $x_1 = x_2 + x_3$ is an invariant of both while loops in the reduced program, and therefore, linear reasoning is sufficient to prove the postcondition for this program. Conceptually, this reduction (and its soundness proof) together with the proof of correctness for the reduced program constitute a proof that the original program \textsc{Mult} is distributive. Our proposed approach can come up with reductions like this and their corresponding proofs fully automatically. Note that a lockstep reduction of the program in Figure \ref{fig:exm1}(ii) would not yield a solution for this problem and therefore the discovery of the right reduction is an integral part of the solution.


\section{Programs and Proofs}

A non-deterministic finite automaton (NFA) is a tuple $A = (Q, \Sigma, \delta, q_0, F)$ where $Q$ is a finite set of states, $\Sigma$ is a finite alphabet, $\delta \subseteq Q \times \Sigma \times Q$ is the transition relation, $q_0 \in Q$ is the initial state, and $F \subseteq Q$ is the set of final states. A deterministic finite automaton (DFA) is an NFA whose transition relation is a function $\delta : Q \times \Sigma \to Q$. The language of an NFA or DFA $A$ is denoted $\lang{A}$, which is defined in the standard way \cite{Hopcroft:2006}.

\subsection{Program Traces}\label{sec:ptrace}

$\state$ denotes the (possibly infinite) set of \emph{program states}. For example, a program with two integer variables has $\state = \mathbb{Z} \times \mathbb{Z}$. $\assert \subseteq \state$ is a (possibly infinite) set of \emph{assertions} on program states.  $\stmt$ denotes a finite alphabet of program \emph{statements}. We refer to a finite string of statements as a (program) \emph{trace}. For each statement $a \in \stmt$ we associate a \emph{semantics} $\sem{a} \subseteq \state \times \state$ and extend $\sem{-}$ to traces via (relation) composition. A trace $x \in \stmt^*$ is said to be \emph{infeasible} if $\sem{x}(\state) = \emptyset$, where $\sem{x}(\state)$ denotes the image of $\sem{x}$ under $\state$.

To abstract away from a particular program syntax, we define a \emph{program} as a regular language of traces. The semantics of a program $P$ is simply the union of the semantics of its traces $\sem{P} = \bigcup_{x \in P} \sem{x}$. Concretely, one may obtain programs as languages by interpreting their edge-labelled control-flow graphs as DFAs: each vertex in the control flow graph is a state, and each edge in the control flow graph is a transition. The control flow graph entry location is the initial state of the DFA and all its exit locations are final states.

\subsection{Safety}\label{sec:safety}

There are many equivalent notions of program safety; we use non-reachability. A program $P$ is \emph{safe} if all traces of $P$ are infeasible, i.e. $\sem{P}(\state) = \emptyset$.
Standard partial correctness specifications are then represented via a simple encoding. Given a precondition $\phi$ and a postcondition $\psi$, the validity of the Hoare-triple $\{\phi\}P\{\psi\}$ is equivalent to the safety of $[\phi] \cdot P \cdot [\neg\psi]$, where $[]$ is a standard assume statement (or the singleton set containing it), and $\cdot$ is language concatenation.

\begin{example}
  We use determinism as an example of how $k$-safety can be encoded in the framework defined thus far. If $P$ is a program then determinism of $P$ is equivalent to safety of $[\phi] \cdot (P_1 \shuffle P_2) \cdot [\neg\phi]$  where $P_1$ and $P_2$ are copies of $P$ operating on disjoint variables, $\shuffle$ is a shuffle product of two languages, and $[\phi]$ is an assume statement asserting that the variables in each copy of $P$ are equal.
\end{example}

A \emph{proof} is a finite set of assertions $\Pi \subseteq \assert$ that includes $\mathit{true}$ and $\mathit{false}$. Each $\Pi$ gives rise to an NFA $\Pi_{NFA} = (\Pi, \state, \delta_\Pi, \mathit{true}, \{\mathit{false}\})$ where $\delta_\Pi(\phi_{pre}, a) = \{ \phi_{post} \mid \sem{a}(\phi_{pre}) \subseteq \phi_{post} \}$. We abbreviate $\lang{\Pi_{NFA}}$ as $\lang{\Pi}$. Intuitively, $\lang{\Pi}$ consists of all traces that can be proven infeasible using only assertions in $\Pi$. Thus the following proof rule is sound \cite{FarzanKP13,FarzanKP15,HeizmannHP09}:
\begin{equation}\label{rule:safety}
  \AxiomC{$\exists \Pi \subseteq \assert \ldotp P \subseteq \lang{\Pi}$}
  \UnaryInfC{$P$ is safe}
  \DisplayProof
  \tag{\textsc{Safe}}
\end{equation}

When $P \subseteq \lang{\Pi}$, we say that $\Pi$ is a proof for $P$. A proof does not uniquely belong to any particular program; a single $\Pi$ may prove many programs correct.


\section{Reductions}
\label{sec:program}

The set of assertions used for a proof is usually determined by a particular language of assertions, and a safe program may not have a (safety) proof in that particular language. Yet, a subset of the program traces may have a proof in that assertion language. If it can be proven that the subset of program runs that have a safety proof are a faithful representation of all program behaviours (with respect to a given property), then the program is correct. This motivates the notion of {\em program reductions}.

\begin{definition}[semantic reduction] \label{def:sr}
  If for programs $P$ and $P'$, $P'$ is safe implies that $P$ is safe, then $P'$ is a \emph{semantic reduction} of $P$ (written $P' \preceq P$).
\end{definition}

The definition immediately gives rise to the following proof rule for proving program safety: 
\begin{equation}\label{rule:safety+reductions-bad}
  \AxiomC{$\exists P' \preceq P, \Pi \subseteq \assert \ldotp P' \subseteq \lang{\Pi}$}
  \UnaryInfC{$P$ is safe}
  \DisplayProof
  \tag{\textsc{SafeRed1}}
\end{equation}
This generic proof rule is not automatable since, given a proof $\Pi$, verifying the existence of the appropriate reduction is {\em undecidable}. Observe that a program is safe if and only if $\emptyset$ is a valid reduction of the program. This means that discovering a semantic reduction and proving safety are mutually reducible to each other. To have decidable premises for the proof rule, we need to formulate an easier (than proving safety) problem in discovering a reduction. One way to achieve this is by  restricting the set of reductions under consideration from all reductions (given in Definition \ref{def:sr}) to a proper subset which more amenable to algorithmic checking.
Fixing a set $\mathcal{R}$ of (semantic) reductions, we will have the rule: 
\begin{equation}\label{rule:safety+reductions}
  \AxiomC{$\exists P' \in \mathcal{R} \ldotp P' \subseteq \lang{\Pi}$}
  \AxiomC{$\forall P' \in \mathcal{R} \ldotp P' \preceq P$}
  \BinaryInfC{$P$ is safe}
  \DisplayProof
  \tag{\textsc{SafeRed2}}
\end{equation}
\begin{proposition}
  \label{prop:proof-rule}
  The proof rule \ref{rule:safety+reductions} is sound.
\end{proposition}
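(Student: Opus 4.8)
The plan is to chain the already-established soundness of the \ref{rule:safety} rule with the defining property of a semantic reduction. First I would read off the left premise of \ref{rule:safety+reductions} to obtain a concrete witness $P' \in \mathcal{R}$ with $P' \subseteq \lang{\Pi}$. This is exactly the premise of \ref{rule:safety}, so applying the soundness of that rule (which we may assume) to this $P'$ and $\Pi$ immediately yields that $P'$ is safe, i.e. $\sem{P'}(\state) = \emptyset$.

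Next I would instantiate the right premise of \ref{rule:safety+reductions} at this same $P'$. This is legitimate precisely because the right premise is universally quantified over $\mathcal{R}$, and $P'$ was chosen in $\mathcal{R}$; hence $P' \preceq P$. By Definition \ref{def:sr}, $P' \preceq P$ means exactly that ``$P'$ safe implies $P$ safe''. Combining this with the previous step gives that $P$ is safe, which is the conclusion of the rule.

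I do not expect a real obstacle: the statement is essentially a two-step transitivity argument, and both ingredients — soundness of \ref{rule:safety} and the unfolding of $\preceq$ — are already in hand. The only point that deserves an explicit sentence is that the two premises are reconciled on a \emph{common} witness $P'$; this is automatic, since the second premise asserts $P' \preceq P$ for \emph{every} element of $\mathcal{R}$, in particular for the witness extracted from the first premise. It is also worth noting that the argument uses nothing about $\mathcal{R}$ other than what the right premise states, namely $\mathcal{R} \subseteq \{\, P' : P' \preceq P \,\}$; this is what makes the generic rule \ref{rule:safety+reductions} usable for any such fixed family $\mathcal{R}$.
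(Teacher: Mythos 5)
Your proof is correct and follows essentially the same route as the paper's: extract the witness $P'$ from the left premise, conclude $\sem{P'} = \emptyset$, and use the right premise to transfer safety to $P$. If anything, your unfolding of $P' \preceq P$ as the bare implication ``$P'$ safe implies $P$ safe'' hews more closely to Definition~\ref{def:sr} than the paper's own wording, which asserts $\sem{P} = \sem{P'}$ --- a stronger fact than the definition of semantic reduction actually provides.
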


\begin{proof}
By the left precondition, there exists some $P' \in \mathcal{R}$ such that $P' \subseteq \lang{\Pi}$, which implies $\sem{P'} = \emptyset$. By the right precondition, $\sem{P} = \sem{P'}$, and therefore $P$ is safe.
\end{proof}

The core contribution of this paper is that it provides an algorithmic solution inspired by the above proof rule. To achieve this, two subproblems are solved: (1) Given a set $\mathcal{R}$ of reductions of a program $P$ and a candidate proof $\Pi$, can we check if there exists a reduction $P' \in \mathcal{R}$ which is covered by the proof $\Pi$? In section \ref{sec:lta}, we propose a new semantic interpretation of an existing notion of infinite tree automata that gives rise to an algorithmic check for this step.
(2) Given a program $P$, is there a general sound set of reductions $\mathcal{R}$ that be effectively represented to accommodate step (1)? In section \ref{sec:sleep}, we propose a construction of an effective set of reductions, representable by our infinite tree automata, using inspirations from known partial order reduction techniques \cite{Godefroid96}.



\section{Proof Checking}
\label{sec:lta}

\begin{wrapfigure}{R}{0.3\textwidth}\vspace{-30pt}
  \begin{center}
  \includegraphics[scale=0.24]{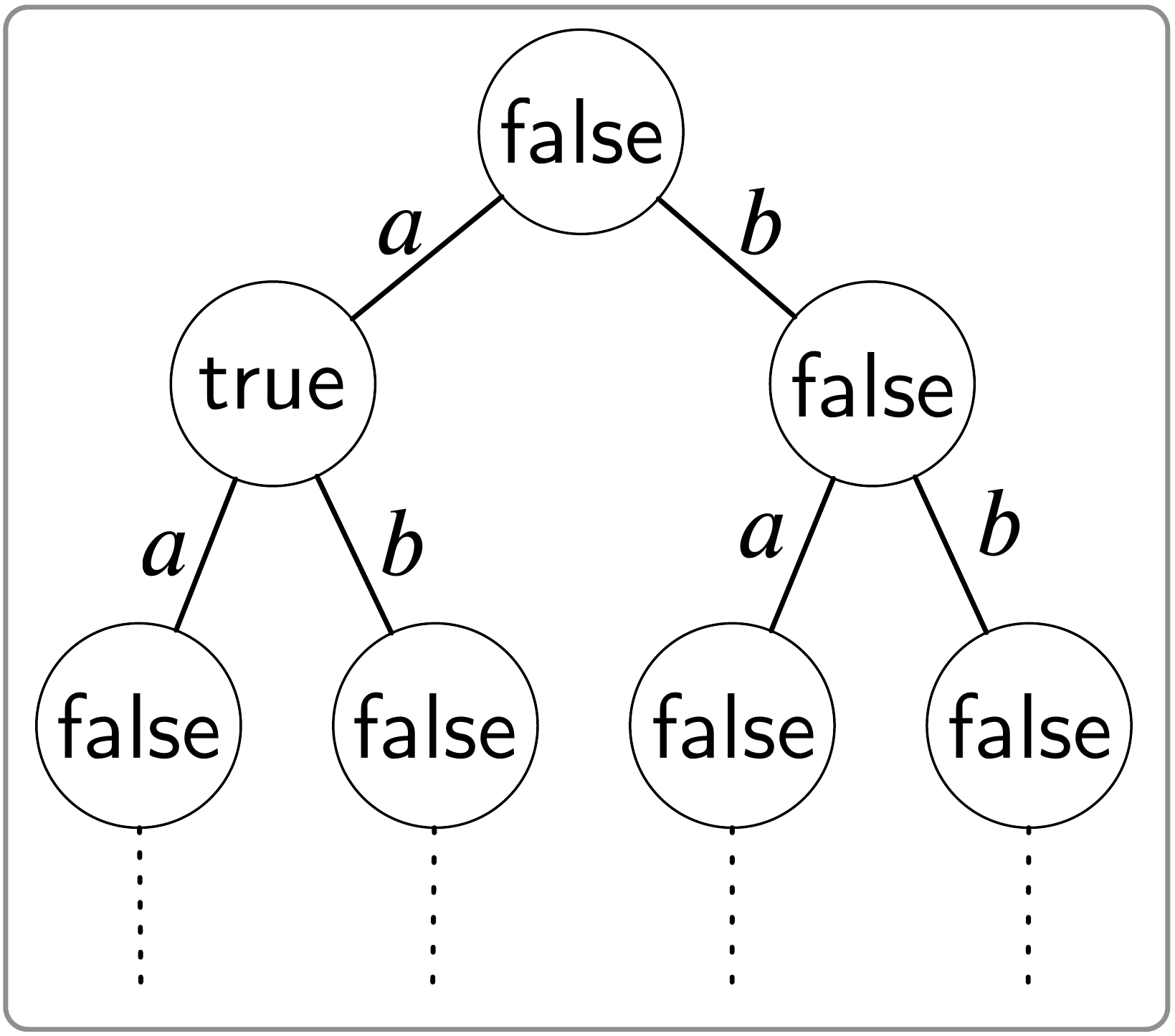}\vspace{-8pt}
  \caption{\small $L = \{a\}$ as an infinite tree.}
  \label{fig:a}\vspace{-15pt}
  \end{center}
\end{wrapfigure}

Given a set of reductions $\mathcal{R}$ of a program $P$, and a candidate proof $\Pi$, we want to check if there exists a reduction $P' \in \mathcal{R}$ which is covered by $\Pi$. We call this \emph{proof checking}. We use tree automata to represent certain classes of languages (i.e sets of sets of strings), and then use operations on these automata for the purpose of proof checking.

The set $\stmt^*$ can be represented as an infinite tree. Each $x \in \stmt^*$ defines a path to a unique node in the tree: the root node is located at the empty string $\epsilon$, and for all $a \in \Sigma$, the node located at $xa$ is a child of the node located at $x$. Each node is then identified by the string labeling the path leading to it. A language $L \subseteq \stmt^*$ (equivalently, $L : \stmt^* \to \bool$) can consequently be represented as an infinite tree where the node at each $x$ is labelled with a boolean value $B \equiv (x \in L)$. An example is given in Figure \ref{fig:a}.
It follows that a set of languages is a set of infinite trees, which can be represented using automata over infinite trees.
Looping Tree Automata (LTAs) are a subclass of B\"{u}chi Tree Automata where all states are accept states \cite{BaaderT01}. The class of Looping Tree Automata is closed under intersection and union, and checking emptiness of LTAs is decidable. Unlike B\"{u}chi Tree Automata, emptiness can be decided in linear time \cite{BaaderT01}.


\begin{definition}
  A Looping Tree Automaton (LTA) over $|\stmt|$-ary, $\bool$-labelled trees is a tuple $M = (Q, \Delta, q_0)$ where $Q$ is a finite set of states, $\Delta \subseteq Q \times \bool \times (\stmt \to Q)$ is the transition relation, and $q_0$ is the initial state.
\end{definition}
Intuitively, an LTA $M = (Q, \Delta, q_0)$ performs a parallel and depth-first traversal of an infinite tree $L$ while maintaining some local state. Execution begins at the root $\epsilon$ from state $q_0$ and non-deterministically picks a transition $(q_0, B, \sigma) \in \Delta$ such that $B$ matches the label at the root of the tree (i.e. $B = (\epsilon \in L)$). If no such transition exists, the tree is rejected. Otherwise, $M$ recursively works on each child $a$ from state $q' = \sigma(a)$ in parallel. This process continues infinitely, and $L$ is accepted if and only if $L$ is never rejected.

Formally, $M$'s execution over a tree $L$ is characterized by a \emph{run} $\delta^* : \stmt^* \to Q$ where $\delta^*(\epsilon) = q_0$ and $(\delta^*(x), x \in L, \lambda a \ldotp \delta^*(xa)) \in \Delta$ for all $x \in \stmt^*$. The set of languages accepted by $M$ is then defined as
  $\lang{M} = \{ L \mid \exists \delta^* \ldotp \text{$\delta^*$ is a run of $M$ on $L$} \}$.

\begin{theorem}\label{thm:pc-decidable}
  Given an LTA $M$ and a regular language $L$, it is decidable whether $
    \exists P \in \lang{M} \ldotp P \subseteq L$.
\end{theorem}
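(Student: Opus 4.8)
The plan is to reduce this to an emptiness check for a single LTA obtained by intersecting $M$ with an automaton recognizing the downward closure of $\lang{L}$ under the subset relation on languages. First I would observe that the property ``$P \subseteq L$'' is itself a property of the infinite tree encoding $P$: since $L$ is regular, fix a DFA $A = (Q_A, \stmt, \delta_A, q_0^A, F_A)$ with $\lang{A} = L$. The condition $P \subseteq L$ says that for every node $x \in \stmt^*$, if $x \in P$ then $x \in L$, i.e. if the tree for $P$ has label $\true$ at $x$ then $\delta_A^*(q_0^A, x) \in F_A$. This is checkable by a deterministic LTA $M_L = (Q_A, \Delta_L, q_0^A)$ that simply simulates $A$ while descending the tree, with transitions $(q, B, \sigma) \in \Delta_L$ iff $\sigma = \lambda a \ldotp \delta_A(q, a)$ and $(B = \true \implies q \in F_A)$. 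One checks directly that $\lang{M_L} = \{ P \mid P \subseteq L \}$.

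Next I would take the product LTA $M' = M \cap M_L$, using the closure of LTAs under intersection (stated in the excerpt just before the definition). By construction $\lang{M'} = \lang{M} \cap \lang{M_L} = \{ P \in \lang{M} \mid P \subseteq L \}$. Hence $\exists P \in \lang{M} \ldotp P \subseteq L$ holds if and only if $\lang{M'} \neq \emptyset$, and emptiness of an LTA is decidable (indeed in linear time) by the cited result of Baader and Tollund. This yields the decision procedure and establishes the theorem.

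The only delicate point, and the one I would spell out carefully, is the construction and correctness of $M_L$ — specifically that every infinite $|\stmt|$-ary $\bool$-labelled tree admits a (unique) run of $M_L$, and that a run witnesses membership in $L$ at exactly the $\true$-labelled nodes. Since $A$ is deterministic and total, for each state $q$ and each label $B$ there is always a transition available (the child-state function $\lambda a \ldotp \delta_A(q,a)$ is forced, and the side condition $B = \true \implies q \in F_A$ only restricts which labels are legal, not whether a transition exists when the label is ``correct''); the subtlety is that when $q \notin F_A$ the automaton has a transition only for $B = \false$, so a tree with a $\true$ at a non-accepting node gets no run and is correctly rejected. I expect this bookkeeping — and noting that LTAs as defined have no accepting-condition freedom, so ``a run exists'' is genuinely the acceptance condition — to be the main thing to get right; the rest is a routine appeal to closure under intersection and decidability of emptiness.
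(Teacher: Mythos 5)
Your proof is correct and takes essentially the same route as the paper: the LTA $M_L$ you construct for $\{P \mid P \subseteq L\}$ is precisely the paper's automaton recognizing $\pow{L}$ (Lemma~\ref{lem:lta-powerset}), and the remainder is the same appeal to closure under intersection (Lemma~\ref{lem:lta-intersect}) and to decidability of LTA emptiness. The paper merely factors the powerset and intersection constructions into separate lemmas, whereas you inline them and spell out the run-existence subtlety for $M_L$ explicitly.
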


\begin{proof}
  The proposition $\exists P \in \lang{M} \ldotp P \subseteq L$ is equivalent to the proposition $\lang{M} \cap \pow{L} = \emptyset$. LTA languages are closed under intersection (Lemma \ref{lem:lta-intersect}), $\pow{L}$ is recognized by an LTA (Lemma \ref{lem:lta-powerset}), and LTA emptiness is decidable \cite{BaaderT01}, so $\lang{M} \cap \pow{L} = \emptyset$ (and therefore $\exists P \in \lang{M} \ldotp P \subseteq L$) is decidable.
\end{proof}

\begin{lemma} \label{lem:lta-intersect}
  The set of languages accepted by an LTA is closed under intersection.
\end{lemma}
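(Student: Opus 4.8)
The plan is to give the standard product construction for tree automata, adapted to LTAs. Given two LTAs $M_1 = (Q_1, \Delta_1, q_0^1)$ and $M_2 = (Q_2, \Delta_2, q_0^2)$ over $|\stmt|$-ary, $\bool$-labelled trees, I would define the product automaton $M = (Q_1 \times Q_2, \Delta, (q_0^1, q_0^2))$ where $\Delta$ contains $((p, q), B, \sigma)$ exactly when there exist $\sigma_1, \sigma_2$ with $(p, B, \sigma_1) \in \Delta_1$, $(q, B, \sigma_2) \in \Delta_2$, and $\sigma(a) = (\sigma_1(a), \sigma_2(a))$ for every $a \in \stmt$. Note the same boolean label $B$ is required by both transitions, which is the key point that makes the construction work for recognizing the intersection. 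Since $Q_1 \times Q_2$ is finite and $\Delta$ is a subset of $(Q_1 \times Q_2) \times \bool \times (\stmt \to (Q_1 \times Q_2))$, this $M$ is a well-formed LTA.

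Next I would prove $\lang{M} = \lang{M_1} \cap \lang{M_2}$ by establishing both inclusions via the characterization of acceptance in terms of runs $\delta^* : \stmt^* \to Q$. For the forward direction, suppose $L \in \lang{M}$ with run $\delta^*$; writing $\delta^*(x) = (\delta_1^*(x), \delta_2^*(x))$ by composing with the two projections, I would check that $\delta_1^*$ is a run of $M_1$ on $L$ and $\delta_2^*$ is a run of $M_2$ on $L$: the root condition $\delta_i^*(\epsilon) = q_0^i$ is immediate, and for each $x$ the transition $(\delta^*(x), x \in L, \lambda a. \delta^*(xa)) \in \Delta$ unpacks — by definition of $\Delta$ — into $(\delta_i^*(x), x \in L, \lambda a. \delta_i^*(xa)) \in \Delta_i$ for $i = 1, 2$. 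Hence $L \in \lang{M_1} \cap \lang{M_2}$. For the backward direction, given runs $\delta_1^*$ of $M_1$ and $\delta_2^*$ of $M_2$ on the same $L$, I would define $\delta^*(x) = (\delta_1^*(x), \delta_2^*(x))$ and verify it is a run of $M$ on $L$, which is again a direct unpacking of the definition of $\Delta$ — here it is essential that both $M_1$ and $M_2$ see the same tree $L$, so the label $x \in L$ that each of their transitions reads at node $x$ is literally the same boolean, matching the single $B$ in the product transition.

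This construction is genuinely routine, so I do not anticipate a real obstacle; the only subtlety worth stating carefully is why LTAs, having no acceptance/fairness condition beyond "never get stuck," make the argument so clean — in a Büchi tree automaton one would additionally have to interleave or track acceptance flags from both components, but for LTAs there is no such condition to reconcile, so the plain product of transition relations suffices. I would also note in passing that closure under union (needed later) is even simpler, via a disjoint-union-of-states construction with a nondeterministic choice of which component to simulate at the root, though the statement here only asks for intersection. The one place I would be slightly careful in the write-up is the bookkeeping of the function $\sigma : \stmt \to Q_1 \times Q_2$ versus the pair of functions $\sigma_1, \sigma_2$: the natural isomorphism $(\stmt \to Q_1 \times Q_2) \cong (\stmt \to Q_1) \times (\stmt \to Q_2)$ is what lets the "for every $a$" quantifier distribute, and I would make sure that is invoked explicitly rather than silently.
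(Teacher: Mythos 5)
Your construction is identical to the paper's: the same product automaton over $Q_1 \times Q_2$ with transitions requiring both components to agree on the boolean label $B$, and the same pairing $\lambda a \ldotp (\sigma_1(a), \sigma_2(a))$ of successor functions. The paper merely states the construction and asserts $\lang{M_\cap} = \lang{M_1} \cap \lang{M_2}$ (citing a mechanical check), whereas you additionally spell out the routine verification of both inclusions via runs; the approach is the same and your argument is correct.
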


\begin{proof}
  The standard construction for B\"{u}chi tree automata intersection also works for LTAs. There is a simpler construction specifically for LTAs, which we include here.

  Let $M_1 = (Q_1, \Delta_1, q_{01})$ and  $M_2 = (Q_2, \Delta_2, q_{02})$ be LTAs.
  Define $M_\cap = (Q_1 \times Q_2, \Delta_\cap, (q_{01}, q_{02}))$, where
  \[
    \Delta_\cap = \{ ((q_1, q_2), B, \lambda a \ldotp (\sigma_1(a), \sigma_2(a))) \mid (q_1, B, \sigma_1) \in \Delta_1 \land (q_2, B, \sigma_2) \in \Delta_2 \}
  \]
  Then $\lang{M_\cap} = \lang{M_1} \cap \lang{M_2}$.

  This proof has been mechanically checked.
\end{proof}

\begin{lemma}\label{lem:lta-powerset}
  If $L$ is a regular language, then $\pow{L}$ is recognized by an LTA.
\end{lemma}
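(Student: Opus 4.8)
The plan is to construct an LTA $M_L$ directly from a DFA recognizing $L$, so that $\lang{M_L} = \pow{L}$. Let $A = (Q, \stmt, \delta, q_0, F)$ be a DFA with $\lang{A} = L$. The key observation is that an infinite $\bool$-labelled $|\stmt|$-ary tree $L'$ lies in $\pow{L}$ exactly when, for every node $x \in \stmt^*$ labelled $\true$, we have $x \in L$, i.e. $\delta^*(q_0, x) \in F$. So the LTA should simply track, at each node $x$, the DFA state $\delta^*(q_0, x)$, and reject whenever it sees a $\true$ label at a node whose tracked state is non-accepting. Concretely I would take $M_L = (Q, \Delta_L, q_0)$ (reusing the DFA's state set and initial state), where
\[
  \Delta_L = \{ (q, B, \lambda a \ldotp \delta(q,a)) \mid B = \false \text{ or } q \in F \}.
\]
That is, from state $q$ the automaton deterministically sends child $a$ to $\delta(q,a)$, and a transition reading label $B$ at state $q$ exists iff $B$ is $\false$ or $q$ is accepting.

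The main step is then to verify $\lang{M_L} = \pow{L}$, which I would do by unfolding the definition of a run. Given any tree $L'$, the run condition $\delta^*(\epsilon) = q_0$ together with $(\delta^*(x), x \in L', \lambda a\ldotp \delta^*(xa)) \in \Delta_L$ forces, by induction on $|x|$, that $\delta^*(x) = \delta^*(q_0, x)$ is the unique DFA state reached on input $x$ — so the run, if it exists, is unique and entirely determined by $A$. Hence a run exists iff for every $x$ the pair $(\delta^*(q_0,x), x\in L')$ is compatible with $\Delta_L$, i.e. iff for every $x$ with $x \in L'$ we have $\delta^*(q_0,x) \in F$, i.e. iff $L' \subseteq L$, i.e. iff $L' \in \pow{L}$. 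For the forward inclusion, from a run we read off $L' \subseteq L$; for the backward inclusion, given $L' \subseteq L$ we define $\delta^*(x) := \delta^*(q_0,x)$ and check it satisfies the run conditions.

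I do not anticipate a serious obstacle here; the construction is essentially the observation that ``$L' \subseteq L$'' is a conjunction over tree nodes of a local, regular condition on the path to that node, which is precisely what LTAs check. The only mild care needed is the bookkeeping that the transition relation $\Delta_L$ must be \emph{total} in the appropriate sense for the traversal to continue — but since for every $q$ and every $a$ the child map $\lambda a\ldotp\delta(q,a)$ is well-defined, and for label $\false$ a transition always exists, the only way the automaton rejects is the intended one: encountering a $\true$ at a node outside $L$. This also matches the remark in the proof of Theorem \ref{thm:pc-decidable} that $\pow{L}$ is LTA-recognizable, and (as with Lemma \ref{lem:lta-intersect}) the construction is simple enough to be a candidate for mechanical checking.
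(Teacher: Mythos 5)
Your construction is identical to the paper's: it builds the LTA over the DFA's state set with transitions $(q, B, \lambda a\ldotp\delta(q,a))$ permitted exactly when $B \implies q \in F$, which is the same condition as your ``$B = \false$ or $q \in F$.'' The verification via uniqueness of the run is correct and simply spells out what the paper leaves implicit.
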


\begin{proof}
  Since $L$ is a regular language, there exists a DFA $A = (Q, \stmt, \delta, q_0, F)$ such that $\lang{A} = L$. Define $M_\pow{L} = (Q, \Delta_\pow{L}, q_0)$ where
  \[
    \Delta_\pow{L} = \{ (q, B, \lambda a \ldotp \delta(q, a)) \mid B \implies q \in F \}
  \]
  Then $\lang{M_\pow{L}} = \pow{L}$.

  This proof has been mechanically checked.
\end{proof}

\paragraph{\bfseries Counterexamples.} Theorem \ref{thm:pc-decidable} effectively states that proof checking is decidable. For automated verification, beyond checking the validity of a proof, we require counterexamples to fuel the development of the proof when the proof does not check. Note that in the simple case of the proof rule \ref{rule:safety}, when $P \not\subseteq \lang{\Pi}$ there exists a counterexample trace $x \in P$ such that $x \notin \lang{\Pi}$.

With our proof rule \ref{rule:safety+reductions}, things get a bit more complicated. First, note that unlike the classic case (\ref{rule:safety}), where a failed proof check coincides with the non-emptiness of an intersection check (i.e. $P \cap \overline{\lang{\Pi}} \not = \emptyset$), in our case, a failed proof check coincides with the emptiness of an intersection check (i.e. $\mathcal{R} \cap \pow{\lang{\Pi}} = \emptyset$). The sets $\mathcal{R}$ and $\pow{\lang{\Pi}}$ are both sets of languages. What does the witness to the emptiness of the intersection look like? Each language member of $\mathcal{R}$ contains at least one string that does not belong to any of the subsets of our proof language. One can collect all such witness strings to guarantee progress across the board in the next round. However, since LTAs can represent an infinite set of languages, one must take care not end up with an infinite set of counterexamples following this strategy. Fortunately, this will not be the case.

\begin{theorem}
  \label{thm:finite-counterexamples}
  Let $M$ be an LTA and let $L$ be a regular language such that $P \not\subseteq L$ for all $P \in \lang{M}$. There exists a finite set of counterexamples $C$ such that, for all $P \in \lang{M}$, there exists some $x \in C$ such that $x \in P$ and $x \notin L$.
\end{theorem}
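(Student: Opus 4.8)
The plan is to extract the finite set of counterexamples directly from the emptiness witness for the LTA recognizing $\lang{M} \cap \pow{L}$, which we know is empty by hypothesis. Recall that LTA emptiness is decided by a fixpoint computation: a state $q$ is \emph{productive} if there is an accepting subtree rooted at a node where $M$ is in state $q$, and $\lang{M} \cap \pow{L} = \emptyset$ precisely because the initial state of the product automaton $M_\cap = M \times M_{\pow{L}}$ (from Lemmas~\ref{lem:lta-intersect} and~\ref{lem:lta-powerset}) is \emph{not} productive. The non-productivity of a state is itself witnessed by a \emph{finite} tree: unrolling the ``bad state'' argument, there is a finite-depth tree of states such that along every branch one reaches, within a bounded number of steps, a state with no available transition — and for the product with $M_{\pow{L}}$, the only way a transition becomes unavailable is when the $M_{\pow{L}}$-component is in a final state $q_F \in F$ but the boolean label is forced to be $\true$ (i.e. the string at that node is in $L$), contradicting the label forced by the $M$-component side, or vice versa. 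Each such branch, read off as the string $x \in \stmt^*$ labeling the path from the root to that point, is a candidate counterexample.

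The key steps, in order, are: (1) Form $M_\cap = M \times M_{\pow{L}}$ and observe $\lang{M_\cap} = \lang{M} \cap \pow{L} = \emptyset$, so its initial state is non-productive. (2) Recall/establish that non-productivity of a state in an LTA is witnessed by a finite tree: define the rank of a non-productive state as the least $n$ such that every transition from it leads, in some direction, to a state of rank $< n$ (rank $0$ meaning no transition exists); by König's lemma / well-foundedness of the fixpoint, every non-productive state has finite rank, so the witness tree has finite depth, hence finitely many leaves. (3) For each leaf of this finite witness tree, the product state is ``stuck'' because the two components disagree on the required boolean label $B = (x \in L)$: the $M$-side already dictates $B$ via some transition it must be able to take (here I will need to be slightly careful — actually it suffices that the \emph{only} obstruction comes from $M_{\pow{L}}$, since $M_\cap$'s transitions require a common $B$). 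Collect the path-strings $x$ of these leaves into a finite set $C$. (4) Verify the covering property: take any $P \in \lang{M}$, with run $\delta^*$. Walking $\delta^*$ down the witness tree, since the $M$-component of every product state on the tree is reachable in the witness, the run must at some leaf hit a product state whose $M_{\pow{L}}$-component is a final state $q_F$ while the label forced by $P$ at that node is $(x \in P)$; the stuckness means $x \in P$ but $x \notin L$ (since the $M_{\pow{L}}$-transition needed $x \notin L$, i.e. $q_F$ requires label $\false$ but cannot — wait, $M_{\pow{L}}$'s transitions exist for label $\false$ always and for label $\true$ only at final states, so stuckness at a leaf means: $M$-side forces $\true$ there but $M_{\pow{L}}$-side is in a non-final state). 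Concretely: the $M_{\pow{L}}$-component tracks membership in $L$, so being in a non-final state at node $x$ means $x \notin L$; and since $P \in \lang{M}$ forces the label $x \in P$, and the product run got stuck, we must have $x \in P$ while $x \notin L$. Thus $x \in C$ is the desired counterexample for $P$.

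The main obstacle I anticipate is step (2)–(3): pinning down \emph{exactly} why a non-productive state in this particular product has a \emph{finite} witness tree, and correctly identifying which component causes the stuckness. The subtlety is that $M$ alone is a \emph{looping} automaton (all states accepting, so $M$ alone never gets stuck except by a missing transition), whereas $M_{\pow{L}}$'s transition structure is deterministic in the tree directions but branches on the boolean label. So the \emph{only} source of stuckness in $M_\cap$ is a boolean-label conflict: $M$ at node $x$ has a transition only for label $(x\in L)$-value that disagrees with what $M_{\pow{L}}$'s current (non-)final status permits. I will need to argue: (a) the set of product states reachable from the initial state and lying on the non-productivity witness is finite (immediate, $Q_\cap$ finite); (b) along the witness, depth is bounded — this is where I invoke that the productivity fixpoint stabilizes in finitely many steps and that a non-productive state, by definition of the \emph{greatest} fixpoint complement, has a finite "extinction rank"; (c) at each leaf, read the path word $x$, and the semantics of $M_{\pow{L}}$ (Lemma~\ref{lem:lta-powerset}) tells us precisely that its state after reading $x$ is $\delta(q_0, x)$, final iff $x \in L$. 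Assembling (a)–(c) gives $|C| < \infty$ and the covering property. Everything else (closure, the product construction, the emptiness equivalence) is routine given the earlier lemmas.
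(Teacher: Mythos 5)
Your proposal is correct and follows essentially the same route as the paper's proof: form the intersection of $M$ with $M_{\pow{L}}$, extract the finite tree witnessing inactivity of the initial product state, take the path-labels of its leaves as $C$, and trace any accepting run of $M$ down the tree to a leaf where the only possible obstruction is the boolean-label conflict ($M$ forcing $B=\top$ while the $L$-automaton component is non-final), yielding $x \in P$ and $x \notin L$. The only elaboration beyond the paper is your explicit rank/well-foundedness argument for why the inactivity witness is a finite tree, which the paper asserts without detail; your eventual resolution of which component causes the stuckness matches the paper's exactly.
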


\begin{proof}
  Assume $M = (Q_M, \Delta_M, q_{0M})$ and let $A = (Q_A, \stmt, \delta_A, q_{0A}, F_A)$ be an automaton that accepts $L$.

  Assume $P \not\subseteq L$ for all $P \in \lang{M}$. Then $\lang{M} \cap \pow{L} = \emptyset$, so the root node of any automaton accepting $\lang{M} \cap \pow{L}$ is \emph{inactive} \cite{BaaderT01}. The set of inactive states is the smallest set satisfying
  \begin{prooftree}
    \AxiomC{$\forall (q, B, \sigma) \in \Delta \ldotp \exists a \ldotp \sigma(a) \in \inactive(M)$}
    \UnaryInfC{$q \in \inactive(M)$}
  \end{prooftree}
  If we instantiate this rule for $M_\cap$ (from Lemma \ref{lem:lta-intersect}) and $M_\pow{L}$ (from Lemma \ref{lem:lta-powerset}), we get
  \begin{prooftree}
    \AxiomC{$\forall (q_M, B, \sigma) \in \Delta_M \ldotp (B \implies q_A \in F_A) \implies \exists a \ldotp (\sigma(a), \delta_A(q_A, a)) \in \inactive(M_\cap)$}
    \UnaryInfC{$(q_M, q_A) \in \inactive(M_\cap)$}
  \end{prooftree}
  A proof of inactivity is essentially a finite tree where every node is labelled by a pair of states $(q_M, q_A)$ (the root node labelled by $(q_{0M}, q_{0A})$) and contains an outgoing edge labelled by some $a \in \stmt$ for every transition. We define our counterexample set $C$ as the set of all strings labelling the path from the root node to any leaf node in this tree. Such a set is clearly finite, so it remains to show that every $P \in \lang{M}$ contains some element of $C$ that is not in $L$.

  Fix some $P \in \lang{M}$ and let $\delta^* : \stmt^* \to Q_M$ be the corresponding accepting run. Since the every node in the tree described above contains an outgoing edge for every transition, there must exist a sequence of nodes through the tree labelled by a sequence of states $(q_{0M}, q_{0A})...(q_{nM}, q_{nA})$ and string $a = a_1...a_n \in C$ such that $\delta^*(a_1...a_i) = q_{iM}$ for all $0 \le i \le n$. The state $(q_{nM}, q_{nA})$ must have no outgoing transitions or else it would not label a leaf node. However, both $q_{nM}$ and $q_{nA}$ must have outgoing transitions in their own respective automata: $q_{nM}$ is part of an accepting run, and every state in $M_\subseteq$ has an outgoing transition by definition. Thus, $(q_{nM}, B, \sigma) \in \Delta_M$ implies $B = \top$ and $q_{nA} \notin F_A$, and therefore $a \in P$ and $a \notin L$.

  This proof has been mechanically checked.
\end{proof}

This theorem justifies our choice of using LTAs instead of more expressive formalisms such as B\"{u}chi Tree Automata. For example, the B\"{u}chi Tree Automaton that accepts the language $\{ \{x\} \mid x \in \stmt^* \}$ would give rise to an infinite number of counterexamples with respect to the empty proof (i.e. $\Pi = \emptyset$). The finiteness of the counterexample set presents an alternate proof that LTAs are strictly less expressive than B\"{u}chi Tree Automata \cite{VardiW94}.


\section{Sleep Set Reductions}\label{sec:sleep}

We have established so far that (1) a set of assertions gives rise to a regular language proof, and (2) given a regular language proof and a set of program reductions recognizable by an LTA, we can check the program (reductions) against the proof. The last piece of the puzzle is to show that a useful class of program reductions can be expressed using LTAs.

Recall our example from Section 2. The reduction we obtain is sound because, for every trace in the full parallel-composition program, an equivalent trace exists in the reduced program. By equivalent, we mean that one trace can be obtained from the other by swapping independent statements. Such an equivalence is the essence of the theory of Mazurkiewicz traces \cite{DiekertM97}.

We fix a reflexive symmetric \emph{dependence relation} $D \subseteq \stmt \times \stmt$. For all $a, b \in \stmt$, we say that $a$ and $b$ are \emph{dependent} if $(a, b) \in D$, and say they are \emph{independent} otherwise. We define $\sim_D$ as the smallest congruence satisfying $xaby \sim_D xbay$ for all $x, y \in \stmt^*$ and independent $a, b \in \stmt$. The closure of a language $L \subseteq \stmt^*$ with respect to $\sim_D$ is denoted $[L]_D$. A language $L$ is \emph{$\sim_D$-closed} if $L = [L]_D$. It is worthwhile to note that all input programs considered in this paper correspond to regular languages that are $\sim_D$-closed.

An equivalence class of $\sim_D$ is typically called a (Mazurkiewicz) trace. We avoid using this terminology as it conflicts with our definition of traces as strings of statements in Section \ref{sec:ptrace}. We assume $D$ is \emph{sound}, i.e. $\sem{ab} = \sem{ba}$ for all independent $a, b \in \stmt$.

\begin{definition}[$D$-reduction]
  A program $P'$ is a \emph{$D$-reduction} of a program $P$, that is $P' \preceq_D P$,  if $[P']_D = P$.
\end{definition}

Note that the equivalence relation on programs induced by $\sim_D$ is a refinement of the semantic equivalence relation used in Definition \ref{def:sr}.

\begin{lemma}\label{lem:d-reduction}
  If $P' \preceq_D P$ then $P' \preceq P$.
\end{lemma}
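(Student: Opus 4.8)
The plan is to unfold the definitions and reduce the claim to the soundness hypothesis on $D$. By Definition~\ref{def:sr}, to show $P' \preceq P$ it suffices to show that $P'$ safe implies $P$ safe, and by the non-reachability formulation of safety this amounts to showing $\sem{P}(\state) = \emptyset$ whenever $\sem{P'}(\state) = \emptyset$. Since $P' \preceq_D P$ means $[P']_D = P$, I would show the stronger semantic equality $\sem{P} = \sem{P'}$, from which the implication (indeed an equivalence) is immediate; note the remark just before the lemma already asserts that $\sim_D$-equivalence refines semantic equivalence, so this is exactly the content to verify.

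First I would establish the single-string fact: for any $x, y \in \stmt^*$ with $x \sim_D y$, we have $\sem{x} = \sem{y}$. This follows by induction on the number of adjacent transpositions witnessing $x \sim_D y$: the base case is reflexivity, and each step replaces a factor $uab v$ by $ubav$ with $a,b$ independent, so using compositionality of $\sem{-}$ and the soundness assumption $\sem{ab} = \sem{ba}$ we get $\sem{uabv} = \sem{u}\sem{a}\sem{b}\sem{v} = \sem{u}\sem{b}\sem{a}\sem{v} = \sem{ubav}$. Then, since $\sem{P} = \bigcup_{x \in P}\sem{x}$ by definition and $P = [P']_D = \{\, y \mid \exists x \in P' \ldotp x \sim_D y \,\}$, every $y \in P$ has some $x \in P'$ with $\sem{y} = \sem{x} \subseteq \sem{P'}$, giving $\sem{P} \subseteq \sem{P'}$; the reverse inclusion $\sem{P'} \subseteq \sem{P}$ holds because $P' \subseteq [P']_D = P$. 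Hence $\sem{P} = \sem{P'}$, and in particular $\sem{P}(\state) = \sem{P'}(\state)$, so $P'$ safe implies $P$ safe, i.e. $P' \preceq P$.

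I do not anticipate a genuine obstacle here — the lemma is essentially a bookkeeping consequence of compositionality of $\sem{-}$ and soundness of $D$. The only point requiring a little care is making sure the induction in the first step is set up over a correct notion of "derivation length" for the congruence $\sim_D$ (since $\sim_D$ is defined as the smallest congruence, one should either argue via its inductive generation or note that any two $\sim_D$-equivalent strings are connected by a finite chain of one-step swaps inside arbitrary contexts), but this is routine. I would therefore keep the write-up short: state the one-string lemma, prove it by the transposition induction, and then take unions.
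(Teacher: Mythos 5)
Your proof is correct and follows essentially the same route as the paper's: both reduce the claim to the single-string fact that $x \sim_D y$ implies $\sem{x} = \sem{y}$ (which the paper simply asserts from soundness of $D$, while you additionally sketch its transposition induction) and then conclude $\sem{P} = \sem{P'}$ by unfolding $P = [P']_D$ and taking unions. The paper phrases this last step as a chain of equivalences on pairs $(a,b) \in \sem{P}$ rather than your two set inclusions, but the content is identical.
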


\begin{proof}
  Since $D$ is sound, then $x \sim_D y$ implies $\sem{x} = \sem{y}$ for all $x, y \in \stmt^*$. Then
  \begin{align*}
    (a, b) \in \sem{P}
    &\iff \exists x \in P \ldotp (a, b) \in \sem{x}\\
    &\iff \exists x \in [P']_D \ldotp (a, b) \in \sem{x}\\
    &\iff \exists x \ldotp \exists y \in P' \ldotp x \sim_D y \land (a, b) \in \sem{x}\\
    &\iff \exists x \ldotp \exists y \in P' \ldotp x \sim_D y \land (a, b) \in \sem{y}\\
    &\iff \exists y \in P' \ldotp (a, b) \in \sem{y}\\
    &\iff (a, b) \in \sem{P'}
  \end{align*}
  for all $a, b \in \state$, so $\sem{P'} = \sem{P}$ and therefore $P' \preceq P$.
\end{proof}

Ideally, we would like to define an LTA that accepts all $D$-reductions of a program $P$, but unfortunately this is not possible in general.

\begin{proposition}[corollary of Theorem 67 of \cite{DiekertM97}]\label{prop:undecidable-d-reduction}
  For arbitrary regular languages $L_1, L_2 \in \stmt^*$ and relation $D$, the proposition $\exists L \preceq_D L_1 \ldotp L \subseteq L_2$ is undecidable.
\end{proposition}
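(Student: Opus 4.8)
The plan is to obtain undecidability by reducing from the problem shown undecidable in Theorem~67 of \cite{DiekertM97}, after first recasting our statement as a more transparent closure-inclusion question. Note that $[L]_D = L_1$ forces $L \subseteq [L]_D = L_1$, so every witness $L$ lies in $L_1 \cap L_2$; conversely, for any $L \subseteq L_1 \cap L_2$ we have $[L]_D \subseteq [L_1 \cap L_2]_D \subseteq [L_1]_D$. Hence $L := L_1 \cap L_2$ is the $\subseteq$-largest candidate witness, and $\exists L \preceq_D L_1 \ldotp L \subseteq L_2$ holds if and only if $[L_1 \cap L_2]_D = L_1$. Unwinding this equality --- a word $w \in L_1$ belongs to $[L_1 \cap L_2]_D$ exactly when some $w' \sim_D w$ belongs to $L_1 \cap L_2$, and $w' \sim_D w \in [L_1]_D$ already yields $w' \in L_1$ --- the condition simplifies to: $L_1$ is $\sim_D$-closed and $L_1 \subseteq [L_2]_D$ (equivalently $[L_1]_D \subseteq [L_2]_D$, since $[L_2]_D$ is $\sim_D$-closed).

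Next I would observe that the side condition ``$L_1$ is $\sim_D$-closed'' is itself decidable for regular $L_1$: a regular language is $\sim_D$-closed iff it is closed under each of the finitely many length-preserving rewrites $uabv \mapsto ubav$ with $a, b$ independent, and the set of one-step images of a regular language under such a rewrite is again regular (it is the image under a rational transduction), so $\sim_D$-closedness amounts to a finite list of inclusion tests between regular languages. Combined with the reformulation above, this shows that a decision procedure for $\exists L \preceq_D L_1 \ldotp L \subseteq L_2$ would decide ``$L_1 \subseteq [L_2]_D$?'' for regular $L_1, L_2$ --- i.e. the inclusion problem for trace closures of regular languages --- which is (an immediate consequence of) Theorem~67 of \cite{DiekertM97}. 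Running the equivalence in the other direction, one packages an undecidable instance from Theorem~67 as a pair $(L_1, L_2)$ and reads off the answer, giving the claimed undecidability.

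The step I expect to be the main obstacle is making this packaging honest, i.e. matching the input conventions. Theorem~67 is phrased in terms of trace closures (or recognizability of rational trace languages), and for a regular $R$ the closure $[R]_D$ is in general not a regular word language, so a trace closure cannot simply be handed over as the regular input $L_1$. One therefore has to check that the hard instances produced in the proof of Theorem~67 may already be taken with $L_1$ regular and $\sim_D$-closed, or to preprocess an arbitrary instance into that shape --- for instance by appending to every word a fresh letter made dependent on all others, forcing the relevant words into a $\sim_D$-closed form without disturbing membership --- and to verify that this transformation preserves the yes/no answer. Everything else in the argument is routine.
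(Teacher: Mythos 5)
Your argument is correct, but it takes a considerably longer route than the paper, and the one step you flag as the ``main obstacle'' is in fact where the paper's entire proof lives. The paper simply instantiates $L_1 = \stmt^*$ and $L_2 = L'$: then $\exists L \preceq_D \stmt^* \ldotp L \subseteq L'$ holds iff $[L']_D = \stmt^*$ (take $L = L'$ for one direction; monotonicity of $[-]_D$ for the other), and universality of the trace closure of a regular language is the known undecidable problem from \cite{DiekertM97}. Your proposal instead proves the general equivalence $\bigl(\exists L \preceq_D L_1 \ldotp L \subseteq L_2\bigr) \iff \bigl(L_1 \text{ is } \sim_D\text{-closed and } L_1 \subseteq [L_2]_D\bigr)$ via the maximal witness $L_1 \cap L_2$, notes that $\sim_D$-closedness is decidable for regular languages, and so reduces from the inclusion problem for trace closures restricted to closed regular $L_1$ --- all of which is sound, and arguably more informative since it characterizes exactly which instances are hard. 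But your closing worry about whether the hard instances of Theorem~67 can be taken with $L_1$ regular and $\sim_D$-closed dissolves immediately: $L_1 = \stmt^*$ is regular and $\sim_D$-closed, and with that choice your condition degenerates to $[L_2]_D = \stmt^*$, i.e.\ the universality problem. No fresh-letter preprocessing is needed, and once you make that observation your proof collapses to the paper's two-line argument.
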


\begin{proof}
  Assume that we can decide whether $\exists L \preceq_D L_1 \ldotp L \subseteq L_2$. Then we can decide whether $[L']_D = \stmt^*$ for any regular language $L' \subseteq \stmt^*$ (by instantiating $L_1 = \stmt^*$ and $L_2 = L'$), which is known to be generally undecidable \cite{DiekertM97}.
\end{proof}

The proposition is decidable only when $\overline{D}$ is transitive, which does not hold for a semantically correct notion of independence  for a parallel program encoding a $k$-safety property, since statements from the same thread are dependent and statements from different program copies are independent. Therefore, we have:

\begin{proposition}
  Assume $P$ is a $\sim_D$-closed program and $\Pi$ is a proof. The proposition $
    \exists P' \preceq_D P \ldotp P' \subseteq \lang{\Pi}$
  is undecidable.
\end{proposition}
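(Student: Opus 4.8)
The plan is to reduce from the undecidability result in Proposition~\ref{prop:undecidable-d-reduction}, which already gives us that $\exists L \preceq_D L_1 \ldotp L \subseteq L_2$ is undecidable for arbitrary regular $L_1, L_2$ and arbitrary $D$. The essential gap between that statement and the one we must prove is twofold: first, the target language here is not an arbitrary regular language $L_2$ but one of the form $\lang{\Pi}$ arising from a proof $\Pi$; second, the source program $P$ is required to be $\sim_D$-closed, which $L_1 = \stmt^*$ trivially satisfies, so that half is free. So the real work is showing that the class of languages $\{\lang{\Pi} \mid \Pi \subseteq \assert \text{ finite}\}$ is rich enough to encode an arbitrary regular language $L_2$, at least modulo the reduction we need.

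The key steps, in order, are as follows. First I would recall that $\lang{\Pi}$ is exactly the set of traces provably infeasible using the assertions in $\Pi$, and that this depends on the ambient semantics $\sem{-}$ of statements and the available assertion set $\assert$. Since the statement of the proposition quantifies over a fixed but arbitrary program model, the idea is to \emph{choose} a semantics and assertion language in which, for a given target regular language $L_2$ recognized by a DFA $A = (Q, \stmt, \delta, q_0, F)$, one can build a finite proof $\Pi_A$ with $\lang{\Pi_A} = L_2$. Concretely, take the state space $\state$ to be (something like) $Q$ itself, let each statement $a$ have semantics $\sem{a} = \{(q, \delta(q,a)) \mid q \in Q\}$ as a relation on $Q$, designate the ``current location'' via assertions, and make the non-accepting/trap configurations the infeasible ones. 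Then a trace $x$ is infeasible exactly when $\delta^*(q_0, x) \notin F$, i.e. exactly when $x \notin L_2$, and a proof using the assertions $\{\{q\} \mid q \in Q\} \cup \{\mathit{true}, \mathit{false}\}$ recognizes precisely $L_2$. One must also verify this choice is compatible with $D$ — i.e. that $D$ is sound for this semantics — but for the hard instances of Proposition~\ref{prop:undecidable-d-reduction} the relevant independent pairs commute at the DFA level (one can arrange $A$ to be $\sim_D$-closed, or pick $D$ on a fresh disjoint alphabet copy), so soundness holds. With $L_1 = \stmt^*$ (which is $\sim_D$-closed) and $\lang{\Pi_A} = L_2$, a decision procedure for $\exists P' \preceq_D P \ldotp P' \subseteq \lang{\Pi}$ would decide $\exists L \preceq_D \stmt^* \ldotp L \subseteq L_2$, contradicting Proposition~\ref{prop:undecidable-d-reduction}; equivalently, as in that proof, it would decide whether $[L_2]_D = \stmt^*$.

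The main obstacle I anticipate is the ``representability'' step: pinning down that an arbitrary regular $L_2$ really is of the form $\lang{\Pi}$ for \emph{some} legitimate choice of $\assert$, $\state$, and $\sem{-}$, \emph{simultaneously} with a dependence relation $D$ whose negation is non-transitive (the paragraph preceding the proposition stresses that the interesting $D$ — statements within a thread dependent, across copies independent — is exactly the non-transitive case, which is where Proposition~\ref{prop:undecidable-d-reduction} bites). I would need to make sure the encoding of $L_2$ as a DFA-driven semantics does not accidentally force extra commutations that would shrink the class of reductions or trivialize the problem. The cleanest way around this is to keep the alphabet of the $k$-safety encoding and the alphabet used to encode $L_2$ interacting only through the structure the paper already fixes, and to invoke the fact — noted in the excerpt — that all programs of interest (including $\stmt^*$) are $\sim_D$-closed, so the source side is unproblematic and only the target side $\lang{\Pi}$ needs the construction above. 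Modulo that encoding lemma, the undecidability is immediate from Proposition~\ref{prop:undecidable-d-reduction}.
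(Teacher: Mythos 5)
Your overall route is exactly the paper's: the paper states this proposition with no explicit proof, presenting it as an immediate instantiation of Proposition~\ref{prop:undecidable-d-reduction} with $L_1 = \stmt^*$ (trivially $\sim_D$-closed) and $L_2 = \lang{\Pi}$, for a dependence relation $D$ whose complement is not transitive. You correctly isolate the one step the paper leaves implicit --- that the class $\{\lang{\Pi} \mid \Pi \subseteq \assert\}$ is rich enough to carry the undecidability of $[L_2]_D = \stmt^*$ --- and in that sense your write-up is more honest than the paper's.

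However, your sketch of that encoding step does not work as written, and since you explicitly rest the whole argument on it (``modulo that encoding lemma''), the gap is real. Three concrete problems. First, with $\sem{a} = \{(q,\delta(q,a)) \mid q \in Q\}$ the relation is total, so $\sem{x}(\state) \neq \emptyset$ for every $x$; no trace is infeasible and $\lang{\Pi_A} = \emptyset$ no matter what assertions you pick. You need a partial semantics (a rejecting sink with empty image). Second, infeasibility is defined as $\sem{x}(\state) = \emptyset$, i.e.\ emptiness of the image of the \emph{entire} state space, not of $\{q_0\}$; so ``$x$ is infeasible iff $\delta^*(q_0,x) \notin F$'' requires either an initial assume statement (which changes $P$ away from $\stmt^*$ and must itself be positioned in $D$ so that $\sim_D$-closedness and the reduction are preserved) or a DFA whose acceptance is start-state-independent. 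Third, and most structurally: in $\Pi_{NFA}$ the unique accepting state $\mathit{false}$ has a self-loop on every letter (since $\sem{a}(\emptyset) \subseteq \emptyset$), so every $\lang{\Pi}$ satisfies $\lang{\Pi}\cdot\stmt^* = \lang{\Pi}$. Hence not every regular $L_2$ is of the form $\lang{\Pi}$, and ``arbitrary regular $L_2$'' is unattainable; you must additionally argue that the universality problem $[L_2]_D = \stmt^*$ remains undecidable when $L_2$ ranges only over extension-closed, $\lang{\Pi}$-realizable languages (plausible, but not automatic --- naive padding such as replacing $L$ by $L\cdot\stmt^*$ changes the answer to the universality question). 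None of this is addressed by the paper either, but it is precisely the content your proof defers, so it needs to be discharged rather than assumed.
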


In order to have a decidable premise for proof rule \ref{rule:safety+reductions} then, we present an approximation of the set of $D$-reductions, inspired by sleep sets \cite{Godefroid96}. The idea is to construct an LTA that recognizes {\em a class of $D$-reductions} of an input program $P$, whose language is assumed to be $\sim_D$-closed. This automaton intuitively makes non-deterministic choices about what program traces to {\em prune} in favour of other $\sim_D$-equivalent program traces for a given reduction. Different non-deterministic choices lead to different $D$-reductions.

\begin{wrapfigure}{r}{0.29\textwidth}\vspace{-28pt}
\begin{center}
\caption{Exploring from $x$ with sleep sets.} \label{fig:sse}
\includegraphics[scale=0.26]{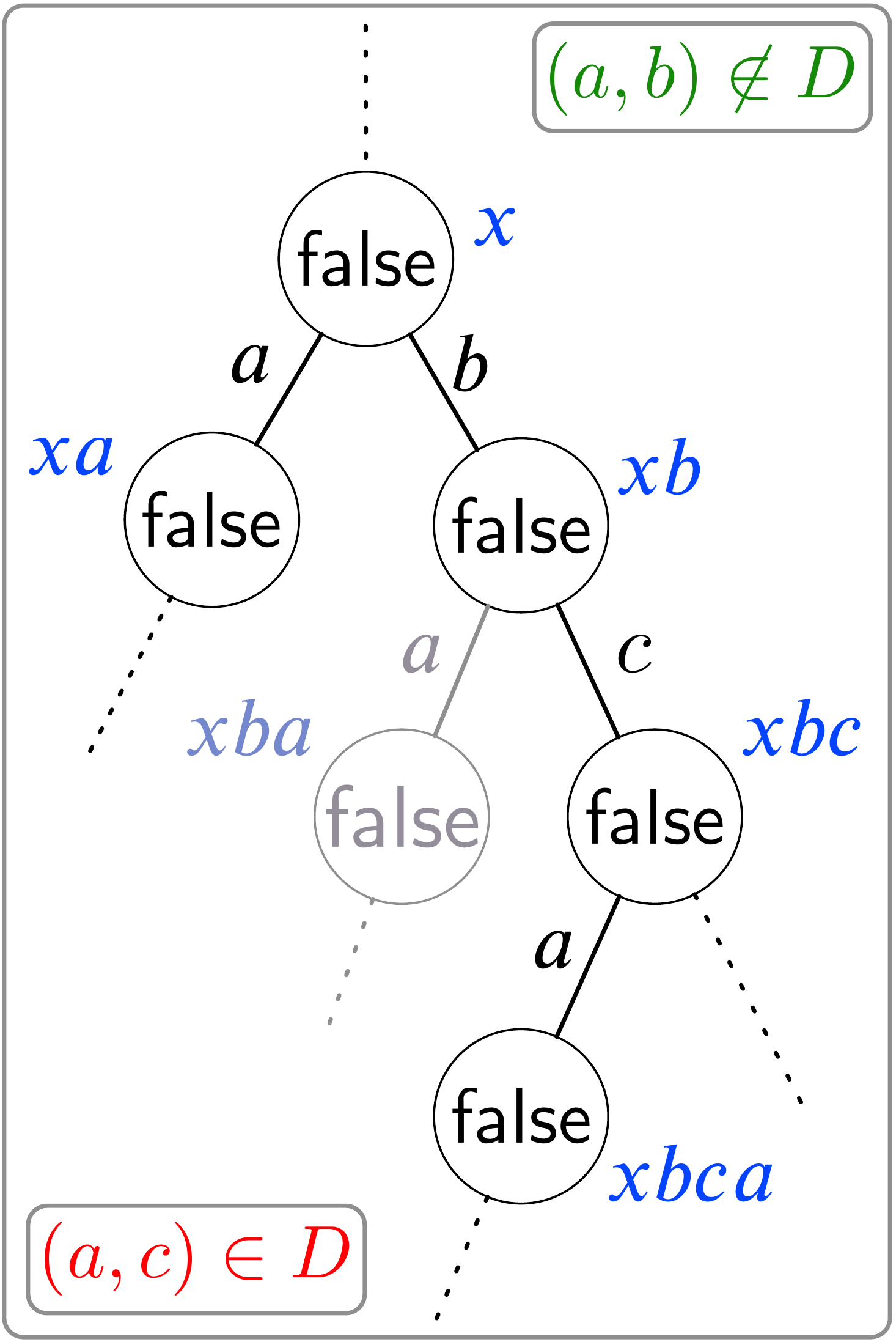}
\end{center}\vspace{-30pt}
\end{wrapfigure}
Consider two statements $a,b \in \stmt$ where $(a,b) \not \in D$. Let $x,y \in \stmt^*$ and consider two program runs $xaby$ and $xbay$. We know $\sem{xbay} = \sem{xaby}$. If the automaton makes a non-deterministic choice that the successors of $xa$ have been explored, then the successors of $xba$ need not be explored (can be pruned away) as illustrated in Figure \ref{fig:sse}.  Now assume $(a,c) \in D$, for some $c \in \stmt$. When the node $xbc$ is being explored, we can no longer safely ignore $a$-transitions, since the equality $\sem{xbcay} = \sem{xabcy}$ is not guaranteed. Therefore, the $a$ successor of $xbc$ has to be explored.
The nondeterministic choice of what child node to explore is modelled by a choice of order in which we explore each node's children. Different orders yield  different reductions. Reductions are therefore characterized as an assignment $R : \stmt^* \to \linear{\stmt}$ from nodes to linear orderings on $\stmt$, where $(a, b) \in R(x)$ means we explore child $xa$ after child $xb$.

Given $R : \stmt^* \to \linear{\stmt}$, the \emph{sleep set} $\sleep_R(x) \subseteq \stmt$ at node $x \in \stmt^*$ defines the set of transitions that can be ignored at $x$:
\begin{align*}
  \sleep_R(\epsilon) &= \emptyset \tag{1} \\
  \sleep_R(xa) &= (\sleep_R(x) \cup R(x)(a)) \setminus D(a) \tag{2}
\end{align*}
Intuitively, (1) no transition can be ignored at the root node, since nothing has been explored yet, and (2) at node $x$, the sleep set of $xa$ is obtained by adding the transitions we explored before $a$ ($R(x)(a)$) and then removing the ones that conflict with $a$ (i.e. are related to $a$ by $D$).

Next, we make precise which nodes are ignored. The set of ignored nodes is the smallest set $\ignore_R : \stmt^* \to \bool$ such that
\begin{align*}
  x \in \ignore_R &\implies xa \in \ignore_R \tag{1} \\
  a \in \sleep_R(x)  &\implies xa \in \ignore_R \tag{2}
\end{align*}
Intuitively, a node $xa$ is ignored if (1) any of its ancestors is ignored ($\ignore_R(x)$), or (2) $a$ is one of the ignored transitions at node $x$ ($a \in \sleep_R(x)$).

Finally, we obtain an actual reduction of a program $P$ from a characterization of a reduction $R$ by removing the ignored nodes from $P$, i.e. $P \setminus \ignore_R$.
\begin{lemma}\label{lem:ignore-reduction}
  For all $R : \stmt^* \to \linear{\stmt}$, if $P$ is a $\sim_D$-closed program then $P \setminus \ignore_R$ is a $D$-reduction of $P$.
\end{lemma}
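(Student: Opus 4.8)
The plan is to prove the two directions of the $D$-reduction requirement separately: namely, that $[P \setminus \ignore_R]_D = P$. The inclusion $[P \setminus \ignore_R]_D \subseteq P$ is the easy direction: since $P \setminus \ignore_R \subseteq P$ and $P$ is assumed $\sim_D$-closed, we immediately get $[P \setminus \ignore_R]_D \subseteq [P]_D = P$. So all the work is in the reverse inclusion $P \subseteq [P \setminus \ignore_R]_D$, which says that every trace $x \in P$ has a $\sim_D$-equivalent trace $x' \in P \setminus \ignore_R$, i.e. one that is not ignored by $R$.

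For the hard direction, the key is to show that for every $x \in \stmt^*$ there exists $x' \sim_D x$ with $x' \notin \ignore_R$. I would prove this by induction on the length of $x$. The base case $x = \epsilon$ is trivial since $\epsilon \notin \ignore_R$ (nothing forces it in). For the inductive step, the natural approach is a well-founded/minimization argument: among all traces $\sim_D$-equivalent to $x$, pick one, say $y = b_1 \cdots b_n$, that is lexicographically least with respect to the orders $R$ assigns along the path (explore-first order), and argue $y \notin \ignore_R$. Concretely, suppose for contradiction that $y \in \ignore_R$; then there is a least prefix $y_{<i} = b_1\cdots b_{i-1}$ and letter $b_i$ with $b_i \in \sleep_R(y_{<i})$. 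Unwinding the definition of $\sleep_R$ (equations (1)--(2)), $b_i$ entered the sleep set because at some earlier position $j < i$ there was a letter $b_j$ with $(b_j, b_i) \in R(y_{<j})$ (so $b_i$ was ``already explored before $b_j$''), and $b_i$ was never removed since, meaning $b_i$ is independent of all of $b_{j+1}, \ldots, b_{i-1}$ (they did not conflict, so $D(b_i)$ never intersected). Independence of $b_i$ from each intermediate letter lets us commute $b_i$ leftward past $b_{i-1}, \ldots, b_{j+1}$ to obtain $y' \sim_D y$ in which $b_i$ now sits immediately after $b_j$; but then at position $j$ we have explored child $b_i$ strictly before child $b_j$ in the $R(y_{<j})$-order, contradicting the minimality/lexicographic-leastness of $y$. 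Hence no such $i$ exists and $y \notin \ignore_R$, and since $y \in P$ (using $\sim_D$-closure of $P$ to get $y \in [\{x\}]_D \subseteq P$ — here we need $x \in P$), we conclude $y \in P \setminus \ignore_R$.

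The main obstacle I anticipate is getting the sleep-set bookkeeping exactly right — in particular, formalizing the claim that ``$b_i \in \sleep_R(y_{<i})$ implies $b_i$ was placed there by some $R(y_{<j})$ with $(b_j,b_i) \in R(y_{<j})$ and survived because $b_i$ is independent of $b_{j+1},\dots,b_{i-1}$.'' This requires a careful secondary induction (or direct unwinding of the recursive definition of $\sleep_R$) tracking when a statement enters and could leave a sleep set, and matching the ``removed by $D(a)$'' clause against the commutation steps. A secondary subtlety is ensuring the lexicographic order used for minimization is genuinely well-founded over the finite set of $\sim_D$-equivalents of a fixed finite string $x$ — since $[\{x\}]_D$ is finite and we compare along a bounded-depth tree of orders, this is fine, but it should be stated explicitly. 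Once the commutation-versus-minimality contradiction is set up correctly, the rest is routine, and the lemma follows by combining both inclusions with Lemma~\ref{lem:d-reduction} if a semantic reduction is also desired.
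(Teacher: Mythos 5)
Your proposal is correct and follows essentially the same route as the paper: the paper reduces the claim to showing that every $x \in \ignore_R$ has a $\sim_D$-equivalent $y \notin \ignore_R$, observes the characterization $\ignore_R = \{x_1ax_2bx_3 \mid (a,b)\in R(x_1) \land \forall c \in ax_2\ldotp (c,b)\notin D\}$ (your ``entered the sleep set at $j$ and survived until $i$'' analysis), and performs well-founded induction on the $R$-induced order via the swap $x_1ax_2bx_3 \mapsto x_1bax_2x_3$ --- which is exactly your lexicographically-minimal-representative argument recast in inductive form. One small slip worth fixing: after commuting, $b_i$ must land \emph{before} $b_j$ (i.e.\ at position $j$), not ``immediately after'' it, for the minimality contradiction to bite; this extra swap is licensed because $b_i$ also survived the $\setminus D(b_j)$ filter upon entering the sleep set, hence is independent of $b_j$ as well.
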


\begin{proof}
  If $[\overline{\ignore_R}]_D = \stmt^*$, then
  \begin{align*}
    [P \setminus \ignore_R]_D
    &= [P \cap \overline{\ignore_R}]_D \\
    &= P \cap [\overline{\ignore_R}]_D \\
    &= P \cap \stmt^* \\
    &= P
  \end{align*}
  so it is sufficient to show $[\overline{\ignore_R}]_D = \stmt^*$. More specifically, it is sufficient to show that for all $x \in \ignore_R$ there exists some $y \notin \ignore_R$ such that $x \sim_D y$.

  First, observe that
  \[ \ignore_R = \{ x_1ax_2bx_3 \mid (a, b) \in R(x_1) \land (\forall c \in ax_2 \ldotp (c, b) \notin D) \}\]
  Second, define the following ordering on traces
  \[ xay <_R xbz \iff (b, a) \in R(x) \land |y| = |z| \]
  The $|y| = |z|$ condition enforces that only strings of the same length are related. Since each $R(x)$ is a linear order on a finite set, it follows that $<_R$ is well-founded.

  Assume $x \in \ignore_R$. We proceed by well-founded induction on $x$ using $<_R$. By the above observation, we have $x = x_1ax_2bx_3$ for some $x_1, x_2, x_3 \in \stmt$ and $(a, b) \in R(x_1)$ such that $(c, b) \notin D$ for all $c \in ax_2$. Define $y = x_1bax_2x_3$. Thus $x \sim_D y$ and $y <_R x$. If $y \notin \ignore_R$, then we are done. Otherwise, we have $y \in \ignore_R$ and $y <_R x$, and so induction completes the proof.

  This proof has been mechanically checked.
\end{proof}

We define the set of all such reductions as $\reduce_D(P) = \{ P \setminus \ignore_R \mid R : \stmt^* \to \linear{\stmt} \}$.

\begin{theorem}
  \label{thm:sleep-reduction}
  For any regular language $P$,  $\reduce_D(P)$ is accepted by an LTA.
\end{theorem}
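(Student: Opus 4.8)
The plan is to construct an LTA $M_{\reduce_D(P)}$ directly, whose states track just enough information to decide, at each node $x \in \stmt^*$, (a) whether $x$ has been ignored, and (b) which transitions currently lie in the sleep set $\sleep_R(x)$. The key observation is that, although a reduction $R : \stmt^* \to \linear{\stmt}$ is an infinite object, the sleep set $\sleep_R(xa)$ depends only on $\sleep_R(x)$, on the local choice $R(x) \in \linear{\stmt}$, and on $D(a)$; moreover the ignore predicate propagates downward. So a single finite piece of state — essentially a subset $S \subseteq \stmt$ recording the current sleep set, plus a ``dead'' flag once a node is ignored — suffices to simulate the generation of all reductions nondeterministically, with the nondeterminism of the LTA's transition relation choosing the linear order $R(x)$ at each node.

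Concretely, I would take the state space to be $Q = \pow{\stmt} \times \{\text{live}, \text{dead}\}$ together with a product with a DFA $A = (Q_A, \stmt, \delta_A, q_{0A}, F_A)$ recognizing $P$ (since $P$ is regular and we must intersect with $P$ to get $P \setminus \ignore_R$ rather than $\stmt^* \setminus \ignore_R$). The initial state is $(\emptyset, \text{live}, q_{0A})$. From a live state $(S, \text{live}, q_A)$, a transition nondeterministically picks a linear order $\prec \,\in \linear{\stmt}$; for each $a \in \stmt$ it computes the new sleep set $S'_a = (S \cup \{ b : b \prec a\}) \setminus D(a)$ following equation (2), routes child $a$ to state $(S'_a, \text{live}, \delta_A(q_A,a))$ if $a \notin S$ and to $(S'_a, \text{dead}, \delta_A(q_A,a))$ if $a \in S$, i.e. $a$ is in the sleep set and hence $xa \in \ignore_R$. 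From a dead state, all children go to dead states. The boolean label $B$ at node $x$ is forced to equal the membership of $x$ in $P \setminus \ignore_R$: if the state is dead, or if $q_A \notin F_A$, then $B = \false$ must hold; if the state is live and $q_A \in F_A$ then $B = \true$; one builds $\Delta$ so that exactly the matching transitions exist. One then argues $\lang{M_{\reduce_D(P)}} = \reduce_D(P)$ by two inclusions: given $R$, reading off $R(x)$ along each node yields a valid run whose accepted tree is $P \setminus \ignore_R$ (using equations (1)–(2) and the definition of $\ignore_R$); conversely, any run of $M$ determines a choice of linear order at each node, hence an $R$, and by induction on depth the run's labels coincide with $P \setminus \ignore_R$. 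Finally, since every state is an accept state (there are no acceptance conditions beyond ``never get stuck''), this is a bona fide LTA, and one should check that the transition relation is nonempty from every state — which holds because from any $(S,\cdot,q_A)$ there is always at least one linear order to pick and the forced label $B$ always admits a matching transition by construction.

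The main obstacle I anticipate is the bookkeeping in equation (2) versus the ignore rule: the sleep set must be updated and passed to child $xa$ \emph{even when $xa$ itself is ignored}, because descendants $xab\ldots$ of an ignored node are themselves ignored only via rule (1), but the formula $\ignore_R = \{ x_1 a x_2 b x_3 \mid (a,b) \in R(x_1) \land \forall c \in a x_2 \ldotp (c,b) \notin D\}$ derived in Lemma \ref{lem:ignore-reduction} shows the subtle interaction between sleep-set removal ($\setminus D(a)$) and the ``no intervening dependent statement'' condition. I would lean on that closed form to verify the equivalence rather than reasoning from the fixpoint definitions directly. A secondary subtlety is that the LTA label at a node is a single bit $x \in (P \setminus \ignore_R)$, so I must ensure the ``dead'' flag and the DFA acceptance bit jointly and correctly determine that bit at every node; getting the forced-label transitions exactly right (so that $\lang{M}$ is a singleton set of trees for each fixed $R$, not something larger) is where an off-by-one error would creep in. None of this requires new ideas beyond Lemmas \ref{lem:lta-intersect}, \ref{lem:lta-powerset} and the closed form above; it is a direct automaton construction.
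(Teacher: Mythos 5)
Your construction is exactly the one the paper gives: the paper's LTA has state space $Q \times \bool \times \pow{\stmt}$ with initial state $(q_0, \bot, \emptyset)$, transitions that nondeterministically pick $R \in \linear{\stmt}$ and send child $a$ to $(\delta(q,a),\ \iota \lor a \in S,\ (S \cup R(a)) \setminus D(a))$, and node label $q \in F \land \neg\iota$ --- your live/dead flag is the paper's $\iota$, and your forced-label discipline matches the paper's boolean component. The proposal is correct and takes essentially the same approach.
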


\begin{proof}
  Since $P$ is regular, there exists a DFA $A = (Q, \stmt, \delta, q_0, F)$ such that $\lang{A} = P$. Define $M_D = (Q \times \bool \times \pow{\stmt}, \Delta_D, (q_0, \bot, \emptyset))$ where
  \[
    \Delta_D = \{ ((q, \iota, S), q \in F \land \neg \iota, \lambda a \ldotp (\delta(q, a), \iota \lor a \in S, (S \cup R(a)) \setminus D(a)) \mid R \in \linear{\stmt} \}
  \]
  The values of $\sleep_R(x)$ and $x \in \ignore_R$ can be computed by a simple left-to-right traversal of the input string $x$. Intuitively, $M$ simulates this computation for a nondeterministically chosen $R : \stmt^* \to \linear{\stmt}$. Partial computations of $\sleep_R$ ($\ignore_R$) are stored in the $\pow{\stmt}$ ($\bool$) part of the state. Thus $\lang{M_D} = \reduce_D(P)$.

  This proof has been mechanically checked.
\end{proof}

Interestingly, every reduction in $\reduce_D(P)$ is optimal in the sense that each reduction contains at most one representative of each equivalence class of $\sim_D$.

\begin{theorem}
  \label{thm:sleep-optimal}
  Fix some $P \subseteq \stmt^*$ and $R : \stmt^* \to \linear{\stmt}$. For all $(x, y) \in P \setminus \ignore_R$, if $x \sim_D y$ then $x = y$.
\end{theorem}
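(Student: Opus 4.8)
The plan is to argue by contradiction, reducing the failure of injectivity to the explicit description of $\ignore_R$ obtained inside the proof of Lemma~\ref{lem:ignore-reduction}. So suppose there are $x, y \in P \setminus \ignore_R$ with $x \sim_D y$ and $x \neq y$. Since $\sim_D$ is generated by swaps of adjacent independent letters, $x$ and $y$ have equal length and equal Parikh image (same number of occurrences of each letter). First I would take $u$ to be the longest common prefix, writing $x = u\,a\,x'$ and $y = u\,b\,y'$ with $a \neq b$; equal Parikh image then forces $b$ to occur in $x'$ and $a$ to occur in $y'$, so I can refine this to $x = u\,a\,x_1\,b\,x_2$ where the displayed $b$ is the first occurrence of $b$ in $a x'$ (hence $b$ does not occur in $a x_1$), and symmetrically $y = u\,b\,y_1\,a\,y_2$ with $a$ not occurring in $b y_1$.

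The workhorse is the classical projection characterization of Mazurkiewicz equivalence \cite{DiekertM97}: if $v \sim_D w$ and $(c,d) \in D$ then $v$ and $w$ have the same projection onto $\{c,d\}$. If needed I would include the one-line induction on the length of a swap sequence: a single swap $v' e f w' \to v' f e w'$ with $(e,f)\notin D$ cannot disturb the $\{c,d\}$-projection, the only problematic case $\{e,f\}=\{c,d\}$ being excluded by $(c,d)\in D$ and symmetry of $D$. Applying this to $x \sim_D y$, I would extract two facts. (i) $(a,b)\notin D$: otherwise the $\{a,b\}$-projections of $x$ and $y$ agree on $\pi_{\{a,b\}}(u)$ and then diverge immediately into $a$ versus $b$. (ii) For every letter $c$ occurring in $a x_1$, $(c,b)\notin D$: otherwise $c \neq b$ (as $b$ is absent from $a x_1$), and the $\{b,c\}$-projection of $x$ places a $c$ between the prefix $\pi_{\{b,c\}}(u)$ and its first subsequent $b$, whereas that of $y = u\,b\,y'$ has $b$ in that slot, contradicting equality after left-cancelling $\pi_{\{b,c\}}(u)$. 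The mirror-image argument yields $(c,a)\notin D$ for every letter $c$ occurring in $b y_1$.

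With these in hand I would invoke the characterization $\ignore_R = \{\, z_1\,c\,z_2\,d\,z_3 \mid (c,d)\in R(z_1)\ \wedge\ (e,d)\notin D \text{ for every letter } e \text{ in } c z_2 \,\}$ from the proof of Lemma~\ref{lem:ignore-reduction}, together with the fact that $R(u)$ is a linear order on the finite set $\stmt$, so exactly one of $(a,b)\in R(u)$ or $(b,a)\in R(u)$ holds. In the first case, $x = u\,a\,x_1\,b\,x_2$ matches the pattern with $(z_1,c,z_2,d,z_3) = (u,a,x_1,b,x_2)$ — fact (ii) supplies exactly the independences required — so $x \in \ignore_R$, a contradiction. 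In the second case, $y = u\,b\,y_1\,a\,y_2$ matches the pattern with $(u,b,y_1,a,y_2)$, so $y \in \ignore_R$, again a contradiction. Hence $x = y$. (Note that $P$ plays no role: the statement is really about the set $\stmt^*\setminus\ignore_R$.)

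I expect the main obstacle to be purely bookkeeping: choosing the ``first occurrence'' decompositions so that the projection lemma can be applied at exactly the granularity matching the shape of the $\ignore_R$-pattern, and checking the $\{b,c\}$-projection argument is airtight when $c$ coincides with $a$ (covered by (i)) or lies inside $x_1$. Once that decomposition is pinned down, the closing case analysis on the total order $R(u)$ is automatic.
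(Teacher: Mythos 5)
Your proposal is correct and follows essentially the same route as the paper's proof: decompose $x$ and $y$ at their first point of disagreement, use $x \sim_D y$ to show that the block separating the displaced letter from that point is independent of it, and then use linearity of $R(u)$ to conclude that whichever of $x, y$ has the ``later'' letter first falls into $\ignore_R$ (the paper handles only one case ``by symmetry'' and goes through $\sleep_R$ directly rather than the explicit $\ignore_R$ characterization, but these are the same argument). Your extra care in justifying the independence facts via projections onto dependent pairs fills in a step the paper merely asserts.
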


\begin{proof}
  It is sufficient to show that $\overline{\ignore_R}$ contains at most one representative of each equivalence class of $\sim_D$, i.e. for all $x, y \notin \ignore_R$, if $x \sim_D y$ then $x = y$.

  Assume $x, y \notin \ignore_R$ and $x \sim_D y$. For a contradiction, assume $x \ne y$. Then $x$ and $y$ must differ at some character, so $x = x_1ax_2$ and $y = x_1bx_3$ for $x_i \in \stmt^*$ and $a \ne b$. Assume (by symmetry) that $(a, b) \in R$. We have $x \sim_D y$, so $b$ must appear somewhere in $x_2$ after a run of elements independent of $b$, i.e. $x = x_1ax_{21}bx_{22}$ where $(b, c) \in D$ for every $c \in x_{21}$. However, this implies $b \in \sleep_R(x_1ax_{21})$, which implies $x_1ax_{21}b \in \ignore_R$, and therefore $x \in \ignore_R$, which is a contradiction.

  This proof has been mechanically checked.
\end{proof}



\section{Algorithms}\label{sec:algorithm}


Figure \ref{fig:loop} illustrates the outline of our verification algorithm. It
is a counterexample-guided abstraction refinement loop in the style of \cite{FarzanKP13,FarzanKP15,HeizmannHP09}. The key
difference is that instead of checking whether some proof $\Pi$ is a proof for the program $P$, it checks if there exists a reduction of the program $P$ that $\Pi$ proves correct.


%

The algorithm relies on an oracle \textsc{Interpolate} that, given a finite set of program traces $C$, returns a proof $\Pi'$, if one exists, such that $C \subseteq \lang{\Pi'}$.  In our tool, we use Craig interpolation to implement the oracle \textsc{Interpolate}. In general, since program traces are the simplest form of sequential programs (loop and branch free), any automated program prover that can handle proving them may be used.

\begin{wrapfigure}{r}{0.625\textwidth}\vspace{-20pt}
\hspace{5pt}\includegraphics[scale=0.19]{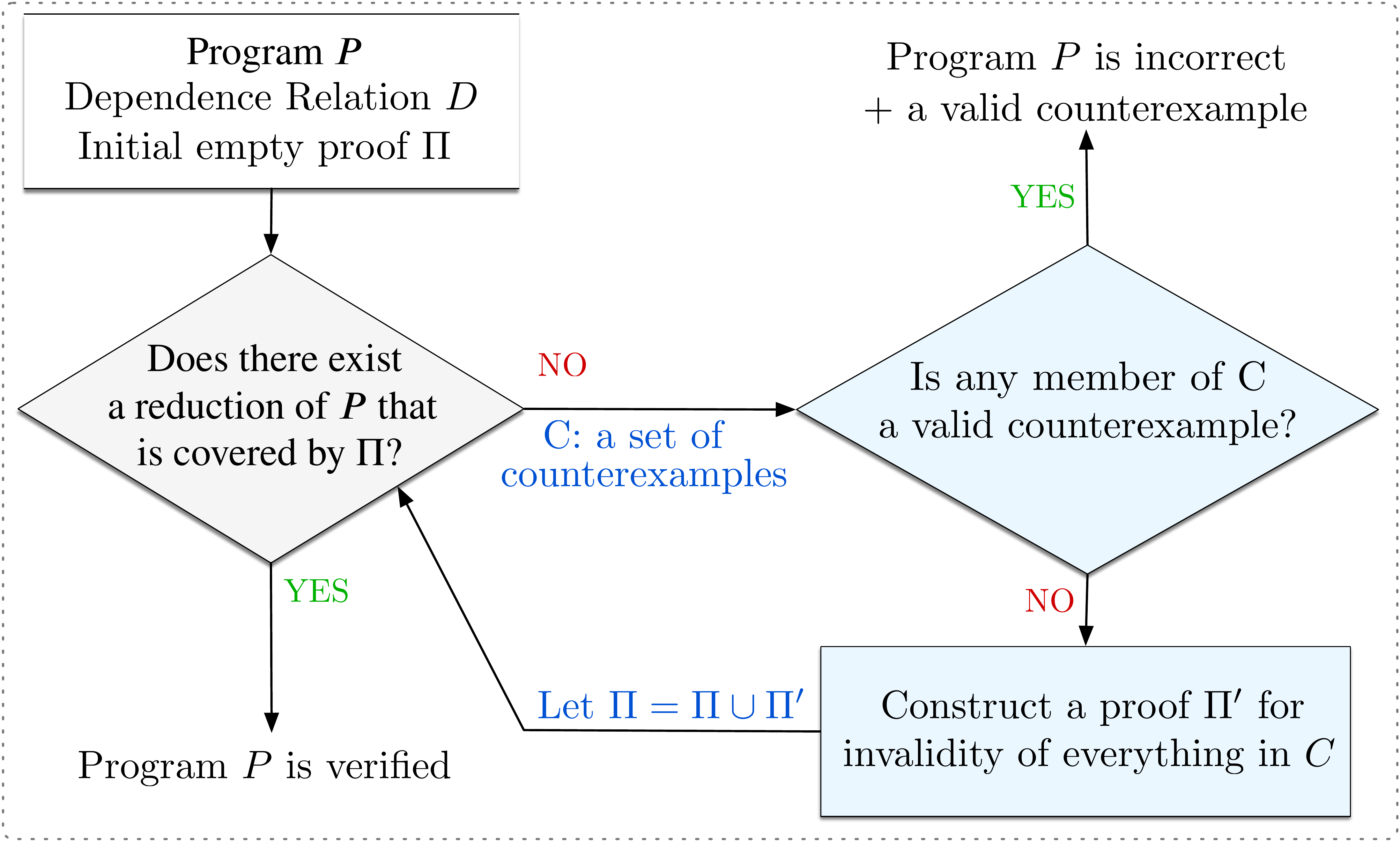}\vspace{-5pt}
\caption{Counterexample-guided refinement loop.}\vspace{-25pt}
\label{fig:loop}
\end{wrapfigure}
The results presented in  Sections \ref{sec:lta} and \ref{sec:sleep} give rise to the proof checking sub routine of the algorithm in Figure \ref{fig:loop} (i.e. the light grey test). Given a program DFA $A_P = (Q_P, \stmt, \delta_P, q_{P0}, F_P)$ and a proof DFA $A_\Pi = (Q_\Pi, \stmt, \delta_\Pi, q_{\Pi0}, F_\Pi)$ (obtained by determinizing the proof NFA $\Pi_{NFA}$), we can decide $\exists P' \in \reduce_D(\lang{A_P}) \ldotp P' \subseteq \lang{A_\Pi}$ by constructing an LTA $M_{P\Pi}$ for $\reduce_D(\lang{A_P}) \cap \pow{\lang{A_\Pi}}$ and checking emptiness (Theorem \ref{thm:pc-decidable}).

\subsection{Progress}

The algorithm corresponding to Figure \ref{fig:loop} satisfies a weak progress theorem: none of the counterexamples from a round of the algorithm will ever appear in a future counterexample set. This, however, is not strong enough to guarantee termination. Alternatively, one can think of the algorithm's progress as follows. In each round new assertions are discovered through the oracle \textsc{Interpolate}, and one can optimistically hope that one can finally converge on an existing target proof $\Pi^*$. The success of this algorithm depends on two factors: (1) the counterexamples used by the algorithm belong to $\lang{\Pi^*}$ and (2) the proof that  \textsc{Interpolate} discovers for these counterexamples coincide with $\Pi^*$. The latter is a typical known wild card in software model checking, which cannot be guaranteed; there is plenty of empirical evidence, however, that procedures based on Craig Interpolation do well in approximating it. The former is a new problem for our refinement loop.

In a standard algorithm in the style of \cite{FarzanKP13,FarzanKP15,HeizmannHP09}, the verification proof rule dictates that every program trace must be in $\lang{\Pi^*}$. In our setting, we only require a subset (corresponding to some reduction) to be in $\lang{\Pi^*}$. This means one cannot simply rely on program traces as {\em appropriate} counterexamples. Theorem \ref{thm:finite-counterexamples} presents a solution to this problem. It ensures that we always feed \textsc{Interpolate} some counterexample from $\Pi^*$ and therefore guarantee progress.

\begin{theorem}[Strong Progress]\label{thm:stp}
  Assume a proof $\Pi^*$ exists for some reduction $P^* \in \mathcal{R}$ and \emph{\textsc{Interpolate}} always returns some subset of $\Pi^*$ for traces in $\lang{\Pi^*}$. Then the algorithm will terminate in at most $|\Pi^*|$ iterations.
\end{theorem}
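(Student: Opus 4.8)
The plan is to track the candidate proof $\Pi$ maintained by the refinement loop across iterations and to show that, under the stated hypotheses, $\Pi$ grows strictly with each non-terminating iteration while always remaining a subset of $\Pi^*$. Since $\Pi^*$ is finite, this bounds the number of iterations by $|\Pi^*|$. First I would set up notation: let $\Pi_0 \subseteq \Pi_1 \subseteq \cdots$ be the sequence of proofs the algorithm holds at the start of each round (with $\Pi_0$ the initial trivial proof $\{\mathit{true},\mathit{false}\}$, which is a subset of $\Pi^*$ since every proof contains these), and let $C_i$ be the counterexample set produced in round $i$. The algorithm terminates in round $i$ exactly when the proof check succeeds, i.e. when $\exists P' \in \mathcal{R} \ldotp P' \subseteq \lang{\Pi_i}$.

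The core of the argument is the inductive claim that $\Pi_i \subseteq \Pi^*$ for every round $i$ the algorithm reaches. Suppose this holds for round $i$ and the algorithm does not terminate in round $i$. Then the proof check fails, so $P' \not\subseteq \lang{\Pi_i}$ for all $P' \in \mathcal{R}$; in particular this applies to the LTA $M$ representing $\mathcal{R}$ (here $\reduce_D(\lang{A_P})$) and the regular language $L = \lang{\Pi_i}$. Theorem~\ref{thm:finite-counterexamples} then yields a finite counterexample set $C_i$ such that every $P' \in \lang{M}$ contains some $x \in C_i$ with $x \notin \lang{\Pi_i}$. The crucial step is to show $C_i \subseteq \lang{\Pi^*}$: apply the witness property to the reduction $P^* \in \mathcal{R}$ guaranteed by the hypothesis — there is some $x \in C_i$ with $x \in P^*$, hence $x \in \lang{\Pi^*}$ since $P^* \subseteq \lang{\Pi^*}$. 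To get $C_i \subseteq \lang{\Pi^*}$ in full, I would invoke that the counterexample set of Theorem~\ref{thm:finite-counterexamples} is constructed from leaf-paths of an inactivity proof tree, each of which is a genuine trace of $P^*$ for the appropriate branch — more directly, I would argue that it suffices to feed \emph{\textsc{Interpolate}} only those counterexamples lying in $\lang{\Pi^*}$, and that at least one such exists (the one witnessing $P^*$), so progress is preserved even if some elements of $C_i$ must be discarded. Then by hypothesis $\textsc{Interpolate}(C_i)$ returns some $\Pi' \subseteq \Pi^*$ with $C_i \cap \lang{\Pi^*} \subseteq \lang{\Pi'}$, and the new proof is $\Pi_{i+1} = \Pi_i \cup \Pi' \subseteq \Pi^*$, closing the induction.

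It remains to show strict growth: $\Pi_{i+1} \supsetneq \Pi_i$ whenever round $i$ does not terminate. This follows from the weak progress property already noted in the text — no counterexample recurs across rounds — combined with the fact that any $x \in C_i$ was, by construction, not in $\lang{\Pi_i}$; since $x \in \lang{\Pi_{i+1}}$ (as $x \in \lang{\Pi'}$ after restricting to $\lang{\Pi^*}$), we must have added at least one new assertion, so $|\Pi_{i+1}| > |\Pi_i|$. Since every $\Pi_i \subseteq \Pi^*$, the chain $|\Pi_0| < |\Pi_1| < \cdots$ is bounded by $|\Pi^*|$, so the algorithm performs at most $|\Pi^*|$ iterations before the proof check succeeds.

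The main obstacle I anticipate is the precise handling of the fact that Theorem~\ref{thm:finite-counterexamples} only guarantees that \emph{some} element of $C_i$ lies in $P^*$ (hence in $\lang{\Pi^*}$), not that all of $C_i$ does; one must be careful that \emph{\textsc{Interpolate}} is applied in a way consistent with its stated contract (returning a subset of $\Pi^*$ \emph{for traces in $\lang{\Pi^*}$}), which likely means the algorithm either filters $C_i$ down to $C_i \cap \lang{\Pi^*}$ or the hypothesis on \textsc{Interpolate} is read as: given any trace set it returns a subset of $\Pi^*$ covering at least those traces that \emph{are} in $\lang{\Pi^*}$. Pinning down this interface — and confirming that the guaranteed witness trace genuinely survives into the argument of \textsc{Interpolate} — is the delicate point; everything else is bookkeeping on a strictly increasing finite chain.
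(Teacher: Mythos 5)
Your overall strategy---invoke Theorem~\ref{thm:finite-counterexamples} to obtain a finite counterexample set, single out the witness $x^*\in P^*$ with $x^*\notin\lang{\Pi}$, and use the hypothesis on \textsc{Interpolate} to conclude that each failing round contributes a new assertion, bounding the number of rounds by $|\Pi^*|$---is the same as the paper's. But your accounting rests on the inductive invariant $\Pi_i\subseteq\Pi^*$, and that invariant is neither maintained by the algorithm nor repairable in the ways you suggest. The counterexample set $C_i$ is fed to \textsc{Interpolate} wholesale, and the hypothesis only constrains what \textsc{Interpolate} returns \emph{for traces in $\lang{\Pi^*}$}; for the elements of $C_i$ outside $\lang{\Pi^*}$ (which generally exist, since $C_i$ must cover every reduction in $\mathcal{R}$, not just $P^*$) the returned assertions may lie outside $\Pi^*$, so $\Pi_{i+1}\subseteq\Pi^*$ can fail already at the first refinement. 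Your proposed fix---filtering $C_i$ down to $C_i\cap\lang{\Pi^*}$---is not something the algorithm can do, because $\Pi^*$ is exactly the unknown target; and discarding the other counterexamples would also break the guarantee that every reduction in $\mathcal{R}$ is covered.

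The paper avoids this by measuring progress differently: it does not claim $\Pi\subseteq\Pi^*$, but only that the quantity $|\Pi\cap\Pi^*|$ strictly increases each failing round. Concretely, the witness $x^*\in C$ for $P^*$ satisfies $x^*\in P^*\subseteq\lang{\Pi^*}$, so interpolating $x^*$ yields $\Pi'\subseteq\Pi^*$ with $x^*\in\lang{\Pi'}$; if $\Pi'\subseteq\Pi$ then $x^*$ could not have been a counterexample, so $\Pi'$ contains an assertion of $\Pi^*$ not already in $\Pi$. Whatever else gets added from the other counterexamples is irrelevant to the count. After at most $|\Pi^*|$ rounds $\Pi^*\subseteq\Pi$, hence $P^*\subseteq\lang{\Pi^*}\subseteq\lang{\Pi}$ and the check succeeds. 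Your argument becomes correct if you simply replace the chain $|\Pi_0|<|\Pi_1|<\cdots$ (under $\Pi_i\subseteq\Pi^*$) with the chain $|\Pi_0\cap\Pi^*|<|\Pi_1\cap\Pi^*|<\cdots$; the rest of your reasoning, including the use of the single witness $x^*$ rather than all of $C_i$, already supplies exactly what that count needs.
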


\begin{proof}
  It is sufficient to show that we learn at least one new assertion in $\Pi^*$ every iteration. Assume we have received a counterexample set $C$ such that, for all $P' \in \mathcal{R}$, there exists some $x \in C$ such that $x \in P'$ and $x \notin \lang{\Pi}$ (Theorem \ref{thm:finite-counterexamples} ensures $C$ exists). Let $x^* \in C$ be the counterexample for $P^*$. Then \textsc{Interpolate}$(x)$ will return new assertions $\Pi' \subseteq \Pi^*$ satisfying $x^* \in \lang{\Pi'}$. If $\Pi' \subseteq \Pi$ then $x^*$ would not have been returned as a counterexample, so there must exist some $\phi \in \Pi'$ (and therefore $\phi \in \Pi^*$) such that $\phi \notin \Pi$.
\end{proof}

Theorem \ref{thm:stp} ensures that the algorithm will never get into an infinite loop due to a bad choice of counterexamples. The condition on \textsc{Interpolate} ensures that divergence does not occur due to the wrong choice of assertions by \textsc{Interpolate} and without it any standard interpolation-based software model checking algorithm may diverge. The assumption that there exists a proof for a reduction of the program in the fixed set $\mathcal{R}$ ensures that the proof checking procedure can verify the target proof $\Pi^*$ once it is reached. Note that, in general, a proof may exist for a reduction of the program which is not in $\mathcal{R}$. Therefore, the algorithm is not complete with respect to all reductions, since checking the premises of \ref{rule:safety+reductions-bad}  is undecidable as discussed in Section \ref{sec:program}.

\subsection{Faster Proof Checking through Antichains}
\label{sec:algorithm:antichains}

The state set of $M_{P\Pi}$, the intersection of program and proof LTAs, has size $|Q_P \times \bool \times \mathcal{P}(\stmt) \times Q_\Pi|$, which is exponential in $|\stmt|$. Therefore, even a linear emptiness test for this LTA can be computationally expensive.  Antichains have been previously used \cite{WulfDHR06}  to optimize certain operations over NFAs that also suffer from exponential blowups, such as deciding universality and inclusion tests. The main idea is that these operations involve computing downwards-closed and upwards-closed sets according to an appropriate subsumption relation, which can be represented compactly as antichains. We employ similar techniques to propose a new emptiness check algorithm.

\paragraph{\bfseries Antichains.} The set of maximal elements of a set $X$ with respect to some ordering relation $\sqsubseteq$ is denoted $\operatorname{max}(X)$. The downwards-closure of a set $X$ with respect to $\sqsubseteq$ is denoted $\floor{X}$. An antichain is a set $X$ where no element of $X$ is related (by $\sqsubseteq$) to another. The maximal elements $\operatorname{max}(X)$ of a finite set $X$ is an antichain. If $X$ is downwards-closed then $\floor{\operatorname{max}(X)} = X$.

\par\noindent\newline The emptiness check algorithm for LTAs from \cite{BaaderT01} computes the set of {\em inactive} states (i.e. states which generate an empty language) and checks if the initial state is inactive. The set of inactive states of an LTA $M = (Q, \Delta, q_0)$ is defined as the smallest set $\inactive(M)$ satisfying
\begin{equation}
  \label{rule:inactive}
  \AxiomC{$\forall (q, B, \sigma) \in \Delta \ldotp \exists a \ldotp \sigma(a) \in \inactive(M)$}
  \UnaryInfC{$q \in \inactive(M)$}
  \DisplayProof
  \tag{\textsc{Inactive}}
\end{equation}
Alternatively, one can view $\inactive(M)$ as the least fixed-point of a monotone (with respect to $\subseteq$) function $F_M : \pow{Q} \to \pow{Q}$
where
\[
  F_M(X) = \{ q \mid \forall (q, B, \sigma) \in \Delta \ldotp \exists a \ldotp \sigma(a) \in X \}.
\]
Therefore, $\inactive(M)$ can be computed using a standard fixpoint algorithm.

If $\inactive(M)$ is downwards-closed with respect to some \textit{subsumption relation} $(\sqsubseteq) \subseteq Q \times Q$, then we need not represent all of $\inactive(M)$. The antichain $\max(\inactive(M))$ of maximal elements of $\inactive(M)$ (with respect to $\sqsubseteq$) would be sufficient to represent the entirety of $\inactive(M)$, and can be exponentially smaller than $\inactive(M)$, depending on the choice of relation $\sqsubseteq$.

A trivial way to compute $\max(\inactive(M))$ is to first compute $\inactive(M)$ and then find the maximal elements of the result, but this involves doing strictly more work than the baseline algorithm. However, observe that if $F_M$ also preserves downwards-closedness with respect to $\sqsubseteq$, then
\begin{align*}
  \max(\inactive(M))
  =& \max(\operatorname{lfp}(F_M)) \\
  =& \max(\operatorname{lfp}(F_M \circ \floor{-} \circ \max))
  = \operatorname{lfp}(\max \circ F_M \circ \floor{-})
\end{align*}
That is, $\max(\inactive(M))$ is the least fixed-point of a function $F^{\max}_M : \pow{Q} \to \pow{Q}$ defined as $F^{\max}_M(X) = \max(F_M(\floor{X}))$. We can calculate $\max(\inactive(M))$ efficiently if we can calculate $F^{\max}_M(X)$ efficiently, which is true in the special case of the intersection automaton for the languages of our proof $\pow{\lang{\Pi}}$ and our program $\reduce_D(P)$, which we refer to as $M_{P\Pi}$.

We are most interested in the state space of $M_{P\Pi}$, which is $Q_{P\Pi} = (Q_P \times \bool \times \pow{\stmt}) \times Q_\Pi$. Observe that states whose $\bool$ part is $\top$ are always active:
\begin{lemma}\label{lem:keep-active}
  $((q_P, \top, S), q_\Pi) \notin \inactive(M_{P\Pi})$ for all $q_P \in Q_P$, $q_\Pi \in Q_\Pi$, and $S \subseteq \stmt$.
\end{lemma}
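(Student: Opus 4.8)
The plan is to exploit two structural facts about the product automaton $M_{P\Pi}$: first, that the Boolean ``ignore'' flag in the program component (the middle coordinate, written $\true$ here) is sticky — once set it is never reset by the transition relation of $M_D$ from Theorem~\ref{thm:sleep-reduction}, since the update is $\iota \vee (a \in S)$; and second, that any state of $M_D$ whose flag is $\true$ emits transitions labelled $(q_P \in F_P \wedge \neg\true) = \false$, which the powerset component $M_{\pow{\lang{\Pi}}}$ of Lemma~\ref{lem:lta-powerset} can always match, because that automaton has a $\false$-labelled transition out of every state. Consequently every state of $M_{P\Pi}$ with flag $\true$ has an outgoing transition all of whose successors again carry the flag $\true$, so such states can never be ``starved'' into inactivity.

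To turn this into a proof I would first recall, from the discussion following rule~\ref{rule:inactive}, that $\inactive(M_{P\Pi})$ is the least fixed-point of the monotone operator $F_{M_{P\Pi}}$, and is therefore contained in every pre-fixed-point, i.e.\ in every set $Y \subseteq Q_{P\Pi}$ with $F_{M_{P\Pi}}(Y) \subseteq Y$. Let $X$ be the set of states of $M_{P\Pi}$ of the form $((q_P, \true, S), q_\Pi)$. It then suffices to show that the complement $\overline{X}$ is a pre-fixed-point, which by the definition of $F_{M_{P\Pi}}$ unfolds to: every $q \in X$ has at least one transition $(q, B, \sigma)$ of $M_{P\Pi}$ such that $\sigma(a) \in X$ for all $a \in \stmt$.

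To exhibit that transition, fix $q = ((q_P, \true, S), q_\Pi) \in X$ and pick any linear order $R \in \linear{\stmt}$ (the construction of $M_D$ supplies one transition per such $R$). This transition of $M_D$ has label $B = (q_P \in F_P \wedge \neg\true) = \false$ and maps $a$ to $(\delta_P(q_P, a),\ \true \vee (a \in S),\ (S \cup R(a)) \setminus D(a))$, whose Boolean component is $\true$. Since $B = \false$, the powerset automaton $M_{\pow{\lang{\Pi}}}$ has the matching transition $(q_\Pi, \false, \lambda a \ldotp \delta_\Pi(q_\Pi, a))$, so by the product construction of Lemma~\ref{lem:lta-intersect}, $M_{P\Pi}$ has a transition from $q$ with label $\false$ whose $a$-successor is $\bigl((\delta_P(q_P, a),\, \true,\, (S \cup R(a)) \setminus D(a)),\ \delta_\Pi(q_\Pi, a)\bigr) \in X$. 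Hence $q \notin F_{M_{P\Pi}}(\overline{X})$; this gives $F_{M_{P\Pi}}(\overline{X}) \subseteq \overline{X}$, and therefore $\inactive(M_{P\Pi}) \subseteq \overline{X}$, i.e.\ no state whose Boolean component is $\true$ is inactive.

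I do not anticipate a genuine obstacle: the argument is essentially the one-line observation ``the flag is sticky and forces the label to $\false$,'' wrapped in a routine pre-fixed-point (Park-induction) argument over $\inactive$. The only place demanding care is the bookkeeping of simultaneously unfolding the three transition relations — that of $M_D$, that of $M_{\pow{\lang{\Pi}}}$, and that of their product — and checking that the labels line up; in particular one must notice that it is precisely because $B$ is forced to $\false$ when the flag is $\true$ that the proof component can never block the transition we need.
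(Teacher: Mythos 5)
Your proof is correct and follows essentially the same route as the paper's: unfold the transition relations of $M_D$, $M_{\pow{\lang{\Pi}}}$, and their product to observe that a state with flag $\top$ always has a ($\bot$-labelled, hence never blocked) transition all of whose successors again have flag $\top$. The only difference is that you make explicit, via the pre-fixed-point (Park induction) argument, the final step that the paper leaves as the one-line remark ``therefore such states cannot be inactive.''
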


\begin{proof}
  Let $A_P = (Q_P, \stmt, \delta_P, q_{0P}, F_P)$ and $A_\Pi = (Q_\Pi, \stmt, \delta_\Pi, q_{0\Pi}, F_\Pi)$ be automata recognizing $P$ and $\lang{\Pi}$, respectively. Then
  \begingroup
  \allowdisplaybreaks
  \begin{align*}
     &\ F_{M_{P\Pi}}(X) \\
    =&\ \{ q \mid \forall (q, B, \sigma) \in \Delta_{P\Pi} \ldotp \exists a \ldotp \sigma(a) \in X \} \\
    =&\ \{ q \mid \forall (q, B, \sigma) \in \Delta_\cap \ldotp \exists a \ldotp \sigma(a) \in X \}
      \tag{$M_{P\Pi}$ is an intersection construction}\\
    =&\ \{ \begin{aligned}[t] & (q_P, q_\Pi) \mid \\
        & \forall (q_P, B, \sigma_1) \in \Delta_D, (q_\Pi, B, \sigma_2) \in \Delta_\pow{\lang{\Pi}} \ldotp \\
        & \exists a \ldotp (\sigma_1(a), \sigma_2(a)) \in X \}
    \end{aligned} \tag{Expanding $\Delta_\cap$ from Lemma \ref{lem:lta-intersect}}\\
    =&\ \{ \begin{aligned}[t] & (q_P, q_\Pi) \mid \\
        & \forall (q_P, B, \sigma_1) \in \Delta_D \ldotp (B \implies q_\Pi \in F_\Pi) \implies \\
        & \exists a \ldotp (\sigma_1(a), \delta_\Pi(q_\Pi, a)) \in X \}
    \end{aligned} \tag{Expanding $\Delta_\pow{\lang{\Pi}}$ from Lemma \ref{lem:lta-powerset}}\\
    =&\ \{ \begin{aligned}[t] & ((q_P, \iota, S), q_\Pi) \mid \\
         & \forall R \in \linear{R} \ldotp (q_P \in F_P \land \neg \iota \implies q_\Pi \in F_\Pi) \implies  \\
         & \exists a \ldotp ((q_P', \iota \lor a \in S, (S \cup R(a)) \setminus D(a)), \delta_\Pi(q_\Pi, a)) \in X \}
    \end{aligned} \tag{Expanding $\Delta_D$ from Lemma \ref{thm:sleep-reduction}}
  \end{align*}
  \endgroup
  Note that if $\iota = \top$ then the body of the set comprehension simplifies to
  \[ \forall R \in \linear{R} \ldotp \exists a \ldotp ((q_P', \top, (S \cup R(a)) \setminus D(a)), \delta_\Pi(q_\Pi, a)) \in X \]
  In other words, a state where $\iota = \top$ always has a transition to another state with $\iota = \top$. Therefore such states cannot be inactive.

  This proof has been mechanically checked.
\end{proof}

The state space can then be assumed to be $Q_{P\Pi} = (Q_P \times \{\bot\} \times \pow{\stmt}) \times Q_\Pi$ for the purposes of checking inactivity. The subsumption relation defined as the smallest relation $\sqsubseteq_{P\Pi}$ satisfying
\[
  S \subseteq S' \implies ((q_P, \bot, S), q_\Pi) \sqsubseteq_{P\Pi} ((q_P, \bot, S'), q_\Pi)
\]
for all $q_P \in Q_P$, $q_\Pi \in Q_\Pi$, and $S, S' \subseteq \stmt$, is a suitable one since:
\begin{lemma}\label{lem:preserve-dc}
  $F_{M_{P\Pi}}$ preserves downwards-closedness with respect to $\sqsubseteq_{P\Pi}$.
\end{lemma}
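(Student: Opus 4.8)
The plan is to work directly from the closed-form expression for $F_{M_{P\Pi}}$ computed in the proof of Lemma~\ref{lem:keep-active}, specialized to the state space $(Q_P \times \{\bot\} \times \pow{\stmt}) \times Q_\Pi$ that Lemma~\ref{lem:keep-active} licenses us to use. Unwinding the definition of $\sqsubseteq_{P\Pi}$, downwards-closedness of a set $Y$ amounts to: whenever $((q_P,\bot,S'),q_\Pi) \in Y$ and $S \subseteq S'$, also $((q_P,\bot,S),q_\Pi) \in Y$. So I would fix a $\sqsubseteq_{P\Pi}$-downwards-closed $X$ (which, by the restriction above, lies inside the $\bot$-coordinate state space), take $((q_P,\bot,S'),q_\Pi) \in F_{M_{P\Pi}}(X)$ and $S \subseteq S'$, and prove $((q_P,\bot,S),q_\Pi) \in F_{M_{P\Pi}}(X)$.

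The first step is a case split on whether the state $((q_P,\bot,\cdot),q_\Pi)$ has outgoing transitions at all. From the intersection construction, a transition at label $B$ requires simultaneously $B = (q_P \in F_P \land \neg\bot) = (q_P \in F_P)$ (forced by $\Delta_D$) and $B \implies q_\Pi \in F_\Pi$ (required by $\Delta_{\pow{\lang{\Pi}}}$); hence if $q_P \in F_P$ and $q_\Pi \notin F_\Pi$ there are no transitions, so both the $S$- and $S'$-states are vacuously in $F_{M_{P\Pi}}(X)$ and we are done. Otherwise transitions exist, one for each $R \in \linear{\stmt}$, and membership in $F_{M_{P\Pi}}(X)$ means: for every $R \in \linear{\stmt}$ there is a letter $a$ whose successor $((\delta_P(q_P,a),\, a \in S',\, (S' \cup R(a))\setminus D(a)),\, \delta_\Pi(q_\Pi,a))$ lies in $X$.

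The heart of the argument is then a short monotonicity computation. For each $R \in \linear{\stmt}$, take the witness letter $a_R$ guaranteed for the $S'$-state. Since its successor lies in $X$, which sits inside the restricted state space (boolean coordinate $\bot$), we must have $a_R \notin S'$, and therefore $a_R \notin S$. Consequently the corresponding successor of the $S$-state on $a_R$ also has boolean coordinate $\bot$, agrees with the $S'$-successor in the $Q_P$- and $Q_\Pi$-coordinates, and has sleep set $(S \cup R(a_R))\setminus D(a_R) \subseteq (S' \cup R(a_R))\setminus D(a_R)$; so it is $\sqsubseteq_{P\Pi}$ the $S'$-successor, which is in $X$, hence it is itself in $X$ by downwards-closedness. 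As this produces a witness letter for the $S$-state for every $R \in \linear{\stmt}$, we conclude $((q_P,\bot,S),q_\Pi) \in F_{M_{P\Pi}}(X)$, which is what we wanted.

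I expect no serious obstacle here: the only things to be careful about are (i) invoking Lemma~\ref{lem:keep-active} up front so that successors with a $\top$ boolean coordinate can be dismissed (they are outside the relevant state space and so never in $X$), which is precisely what forces $a_R \notin S$ and keeps the $S$-successor inside the restricted space; and (ii) observing that the sleep-set update $S \mapsto (S \cup R(a))\setminus D(a)$ is monotone in $S$ for fixed $a$ and $R$, so that subset-inclusion of sleep sets is preserved along a shared transition. Everything else is routine bookkeeping; note in particular $\linear{\stmt} \neq \emptyset$, so the universal quantification over $R$ is over a nonempty set and no degenerate case arises.
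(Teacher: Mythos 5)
Your proposal is correct and follows essentially the same route as the paper's proof: expand $F_{M_{P\Pi}}$ through the intersection construction, restrict to the $\bot$-coordinate state space via Lemma~\ref{lem:keep-active}, and conclude from the monotonicity of the sleep-set update $S \mapsto (S \cup R(a)) \setminus D(a)$ together with downwards-closedness of $X$. The paper states this more tersely (it leaves the vacuous no-transition case and the transfer of the witness letter from $S'$ to $S$ implicit), but the substance is identical.
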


\begin{proof}
  Let $A_P = (Q_P, \stmt, \delta_P, q_{0P}, F_P)$ and $A_\Pi = (Q_\Pi, \stmt, \delta_\Pi, q_{0\Pi}, F_\Pi)$ be automata recognizing $P$ and $\lang{\Pi}$, respectively. Then
  \begingroup
  \allowdisplaybreaks
  \begin{align*}
     &\ F_{M_{P\Pi}}(X) \\
    =&\ \{ q \mid \forall (q, B, \sigma) \in \Delta_{P\Pi} \ldotp \exists a \ldotp \sigma(a) \in X \} \\
    =&\ \{ q \mid \forall (q, B, \sigma) \in \Delta_\cap \ldotp \exists a \ldotp \sigma(a) \in X \}
      \tag{$M_{P\Pi}$ is an intersection construction}\\
    =&\ \{ \begin{aligned}[t] & (q_P, q_\Pi) \mid \\
        & \forall (q_P, B, \sigma_1) \in \Delta_D, (q_\Pi, B, \sigma_2) \in \Delta_\pow{\lang{\Pi}} \ldotp \\
        & \exists a \ldotp (\sigma_1(a), \sigma_2(a)) \in X \}
    \end{aligned} \tag{Expanding $\Delta_\cap$ from Lemma \ref{lem:lta-intersect}}\\
    =&\ \{ \begin{aligned}[t] & (q_P, q_\Pi) \mid \\
        & \forall (q_P, B, \sigma_1) \in \Delta_D \ldotp (B \implies q_\Pi \in F_\Pi) \implies \\
        & \exists a \ldotp (\sigma_1(a), \delta_\Pi(q_\Pi, a)) \in X \}
    \end{aligned} \tag{Expanding $\Delta_\pow{\lang{\Pi}}$ from Lemma \ref{lem:lta-powerset}}\\
    =&\ \{ \begin{aligned}[t] & ((q_P, \iota, S), q_\Pi) \mid \\
         & \forall R \in \linear{R} \ldotp (q_P \in F_P \land \neg \iota \implies q_\Pi \in F_\Pi) \implies  \\
         & \exists a \ldotp ((q_P', \iota \lor a \in S, (S \cup R(a)) \setminus D(a)), \delta_\Pi(q_\Pi, a)) \in X \}
    \end{aligned} \tag{Expanding $\Delta_D$ from Lemma \ref{thm:sleep-reduction}} \\
    =&\ \{ \begin{aligned}[t] & ((q_P, \bot, S), q_\Pi) \mid \\
         & \forall R \in \linear{R} \ldotp (q_P \in F_P \implies q_\Pi \in F_\Pi) \implies  \\
         & \exists a \notin S \ldotp ((q_P', \bot, (S \cup R(a)) \setminus D(a)), \delta_\Pi(q_\Pi, a)) \in X \}
    \end{aligned} \tag{Restricting $\bool$-part of the domain to $\{\bot\}$} \\
  \end{align*}
  \endgroup
  where
  \begin{align*}
    q_P' &= \delta_P(q_P, a) & q_\Pi' &= \delta_\Pi(q_\Pi, a)
  \end{align*}
  Recall the subsumption relation $\sqsubseteq_{P\Pi}$:
  \[
    S \subseteq S' \implies ((q_P, \bot, S), q_\Pi) \sqsubseteq ((q_P, \bot', S'), q_\Pi)
  \]
  Assume $X \subseteq (Q_P \times \{\bot\} \times \pow{\stmt}) \times Q_\Pi$ is downwards-closed with respect to $\sqsubseteq_{P\Pi}$. For any $S, S' \subseteq \stmt$ such that $S \subseteq S'$, we have
  \[
    (S \cup R(a)) \setminus D(a) \subseteq S' \cup R(a) \setminus D(a)
  \]
  for all $R \in \linear{\stmt}$, so $F_{P\Pi}(X)$ is also downwards-closed with respect to $\sqsubseteq_{P\Pi}$.

  This proof has been mechanically checked.
\end{proof}

The function $F^{\max}_{M_{P\Pi}}$ is a function over relations
\[
  F^{\max}_{M_{P\Pi}} : \pow{(Q_P \times \{\bot\} \times \pow{\stmt}) \times Q_\Pi} \to \pow{(Q_P \times \{\bot\} \times \pow{\stmt}) \times Q_\Pi}
\]
but in our case it is more convenient to view it as a function over functions
\[
  F^{\max}_{M_{P\Pi}} : (Q_P \times \{\bot\} \times Q_\Pi \to \pow{\pow{\stmt}}) \to (Q_P \times \{\bot\} \times Q_\Pi \to \pow{\pow{\stmt}})
\]

Through some algebraic manipulation and some simple observations, we can define $F^{\max}_{M_{P\Pi}}$ functionally as follows.
\begin{lemma}\label{lem:fmax-def}
  For all $q_P \in Q_P$, $q_\Pi \in Q_\Pi$, and $X : Q_P \times \{\bot\} \times Q_\Pi \to \pow{\pow{\stmt}}$,
\[
  F^{\max}_{M_{P\Pi}}(X)(q_P, \bot, q_\Pi) = \begin{cases}
    \{ \stmt \} & \text{if $q_P \in F_P \land q_\Pi \notin F_\Pi$} \\
    \bigsqcap \limits_{R \in \linear{\stmt}}
    \bigsqcup \limits_{\substack{
      a \in \Sigma \\
      S \in X(q'_P, \bot, q'_\Pi)}}
      S'
    & \text{otherwise}
\end{cases}
\]
where
\begin{align*}
  q'_P &= \delta_P(q_P, a) &
  X \sqcap Y &= \operatorname{max} \{ x \cap y \mid x \in X \land y \in Y \} \\
  q'_\Pi &= \delta_\Pi(q_\Pi, a) &
  X \sqcup Y &= \operatorname{max} (X \cup Y)
\end{align*}
\[
  S' = \begin{cases}
    \{ (S \cup D(a)) \setminus \{a \} \} & \text{if $R(a) \setminus D(a) \subseteq S$} \\
    \emptyset & \text{otherwise}
  \end{cases}
\]
\end{lemma}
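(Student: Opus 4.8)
The plan is to specialize the identity $F^{\max}_{M_{P\Pi}}(X) = \max(F_{M_{P\Pi}}(\floor{X}))$ using the final characterization of $F_{M_{P\Pi}}$ obtained in the proof of Lemma~\ref{lem:preserve-dc} (the line in which the $\bool$-component of the domain has been restricted to $\{\bot\}$, justified by Lemma~\ref{lem:keep-active}). Since $\sqsubseteq_{P\Pi}$ only relates states that agree on their $Q_P$- and $Q_\Pi$-components and compares the $\pow{\stmt}$-components by $\subseteq$, both $\floor{-}$ and $\max$ decompose independently at each pair $(q_P, q_\Pi)$; this is exactly what makes the functional view $X : Q_P \times \{\bot\} \times Q_\Pi \to \pow{\pow{\stmt}}$ meaningful, with the relational $\floor{X}$ corresponding pointwise to ``all subsets of some member of $X(q_P,\bot,q_\Pi)$''. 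Hence it suffices to compute, for each $(q_P,q_\Pi)$, the antichain $\max\{\, S \mid ((q_P,\bot,S),q_\Pi) \in F_{M_{P\Pi}}(\floor{X})\,\}$.

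First I would split on whether $q_P \in F_P \wedge q_\Pi \notin F_\Pi$, i.e.\ on whether the guard $q_P \in F_P \implies q_\Pi \in F_\Pi$ occurring in the characterization fails. If it fails, the body of the comprehension holds vacuously for every $R$ and every $S$, so the set is all of $\pow{\stmt}$ and its maximum is $\{\stmt\}$ --- the first case of the statement. Otherwise the guard holds, and $S$ qualifies iff for every $R \in \linear{\stmt}$ there are $a \notin S$ and $S_0 \in X(\delta_P(q_P,a), \bot, \delta_\Pi(q_\Pi,a))$ with $(S \cup R(a)) \setminus D(a) \subseteq S_0$ (unfolding membership in $\floor{X}$; the restriction ``$a \notin S$'' is already present in the Lemma~\ref{lem:preserve-dc} characterization because a transition on $a \in S$ lands in a $\top$-state, which is never in $\floor{X}$).

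The crux is a short set-algebra computation: using reflexivity of $D$ (so $a \in D(a)$), for fixed $R$, $a$, $S_0$ the conjunction ``$a \notin S$ and $(S \cup R(a))\setminus D(a) \subseteq S_0$'' is equivalent to ``$R(a)\setminus D(a) \subseteq S_0$ and $S \subseteq (S_0 \cup D(a)) \setminus \{a\}$''. Consequently the $S$ qualifying for a fixed triple $(R,a,S_0)$ are exactly $\floor{S'}$, with $S'$ the singleton/empty set in the statement. Taking the union over $a$ and $S_0$ (fixed $R$) and then the intersection over all $R$, I would finish using $\bigcup_i \floor{A_i} = \floor{\bigcup_i A_i}$ together with $\floor{A} \cap \floor{B} = \floor{\{x \cap y \mid x \in A,\ y\in B\}}$, and the facts $\max(\floor{C}) = \max(C)$, $A \sqcup B = \max(A \cup B)$, $A \sqcap B = \max\{x\cap y \mid x\in A,\ y\in B\}$. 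This turns the union over $(a,S_0)$ into the $\bigsqcup$ and the intersection over $R$ into the $\bigsqcap$ of the statement, and the claim follows. (Note $\linear{\stmt} \neq \emptyset$, so the outer $\bigsqcap$ is never a vacuous intersection; an empty inner $\bigsqcup$, which arises exactly when no $(a,S_0)$ works for some $R$, propagates to $\emptyset$ and correctly signals that no $S$ qualifies.)

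The main obstacle is not any single deduction but the bookkeeping required to pass back and forth between downward-closed families of subsets of $\stmt$ and the antichains of their maximal elements, and to verify that $\sqcup$ and $\sqcap$ as defined really do compute the maxima of unions and intersections of such families and that the whole computation decomposes pointwise over $(q_P,q_\Pi)$. The only genuinely non-mechanical ingredient is the set identity above, which is where reflexivity of $D$ enters; everything else is routine algebra applied to the characterization of $F_{M_{P\Pi}}$ already established in Lemma~\ref{lem:preserve-dc}.
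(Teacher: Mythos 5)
Your proposal is correct and follows essentially the same route as the paper's proof: starting from the $\{\bot\}$-restricted characterization of $F_{M_{P\Pi}}$ established in Lemma~\ref{lem:preserve-dc}, case-splitting on the guard $q_P \in F_P \land q_\Pi \notin F_\Pi$, and then using the same set-algebra rewriting of ``$a \notin S \land (S \cup R(a)) \setminus D(a) \subseteq S_0$'' to identify the qualifying $S$ for each $(R,a,S_0)$ as the downward closure of a single set, before converting unions and intersections of downward-closed families into $\sqcup$ and $\sqcap$ of antichains. (Minor remark: the set identity you isolate as the crux holds by pure Boolean algebra, so reflexivity of $D$ is not actually needed there.)
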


\begin{proof}
  From the proof of Lemma \ref{lem:preserve-dc} we have
  \begin{align*}
    &\ F_{M_{P\Pi}}(X) \\
   =&\ \{ \begin{aligned}[t] & ((q_P, \bot, S), q_\Pi) \mid \\
        & \forall R \in \linear{R} \ldotp (q_P \in F_P \implies q_\Pi \in F_\Pi) \implies  \\
        & \exists a \notin S \ldotp ((q_P', \bot, (S \cup R(a)) \setminus D(a)), q_\Pi') \in X \}
          \end{aligned}
  \end{align*}
  where
  \begin{align*}
    q_P' &= \delta_P(q_P, a) & q_\Pi' &= \delta_\Pi(q_\Pi, a)
  \end{align*}
  Since $\pow{(Q_\Pi \times \{\bot\} \times \pow{\stmt}) \times Q_P} \simeq Q_\Pi \times \{\bot\} \times Q_P \to \pow{\pow{\stmt}}$ we can reformulate $F_{M_{P\Pi}}$ as a function
  \begin{align*}
    &\ F_{M_{P\Pi}}(X)(q_P, \bot, q_\Pi) \\
   =&\ \{ \begin{aligned}[t] S \mid
        & \forall R \in \linear{R} \ldotp (q_P \in F_P \implies q_\Pi \in F_\Pi) \implies  \\
        & \exists a \notin S \ldotp (S \cup R(a)) \setminus D(a) \in X(q'_P, \bot, q_\Pi') \}
          \end{aligned}
  \end{align*}
  and therefore
  \begin{align*}
    &\ F^{\max}_{M_{P\Pi}}(X)(q_P, \bot, q_\Pi) \\
   =&\ \max \{ \begin{aligned}[t] S \mid
        & \forall R \in \linear{R} \ldotp (q_P \in F_P \implies q_\Pi \in F_\Pi) \implies  \\
        & \exists a \notin S \ldotp (S \cup R(a)) \setminus D(a) \in \lfloor X(q'_P, \bot, q_\Pi') \rfloor \}
          \end{aligned} \\
   =&\ \begin{cases}
        \{\Sigma\} & \text{if $q_P \in F_P \land q_\Pi \notin F_\Pi$} \\
        \max \{ \begin{aligned}[t] S \mid
          & \forall R \in \linear{R} \ldotp \exists a \notin S \ldotp \\
          & (S \cup R(a)) \setminus D(a) \in \lfloor X(q'_P, \bot, q_\Pi') \rfloor \}
          \end{aligned} & \text{otherwise}
      \end{cases}
  \end{align*}
  The first case is already in the form we want, so we focus on the second case:
  \begingroup
  \allowdisplaybreaks
  \begin{align*}
    &\ \max \{ \begin{aligned}[t] S \mid
          & \forall R \in \linear{R} \ldotp \exists a \notin S \ldotp \\
          & (S \cup R(a)) \setminus D(a) \in \lfloor X(q'_P, \bot, q_\Pi') \rfloor \}
          \end{aligned} \\
   =&\ \max \{ \begin{aligned}[t] S \mid
          & \forall R \in \linear{R} \ldotp \exists a \in \Sigma, S^\uparrow \in X(q'_P, \bot, q_\Pi') \ldotp \\
          & a \notin S \land (S \cup R(a)) \setminus D(a) \subseteq S^\uparrow \}
          \end{aligned} \\
   =&\  \bigsqcap\limits_{R \in \linear{R}}
        \bigsqcup \limits_{\substack{
          a \in \Sigma \\
          S^\uparrow \in X(q'_P, \bot, q'_\Pi)}}
        \max \{ a \notin S \land (S \cup R(a)) \setminus D(a) \subseteq S^\uparrow \} \\
   =&\  \bigsqcap\limits_{R \in \linear{R}}
        \bigsqcup \limits_{\substack{
          a \in \Sigma \\
          S^\uparrow \in X(q'_P, \bot, q'_\Pi)}}
        \max \{ a \notin S \land S \cup R(a) \subseteq S^\uparrow \cup D(a) \} \\
   =&\  \bigsqcap\limits_{R \in \linear{R}}
        \bigsqcup \limits_{\substack{
          a \in \Sigma \\
          S^\uparrow \in X(q'_P, \bot, q'_\Pi)}}
        \max \{ a \notin S \land S \subseteq S^\uparrow \cup D(a) \land R(a) \subseteq S^\uparrow \cup D(a) \} \\
   =&\  \bigsqcap\limits_{R \in \linear{R}}
        \bigsqcup \limits_{\substack{
          a \in \Sigma \\
          S^\uparrow \in X(q'_P, \bot, q'_\Pi)}}
        \max \{ S \subseteq (S^\uparrow \cup D(a)) \setminus \{a\} \land R(a) \subseteq S^\uparrow \cup D(a) \} \\
   =&\  \bigsqcap\limits_{R \in \linear{R}}
        \bigsqcup \limits_{\substack{
          a \in \Sigma \\
          S \in X(q'_P, \bot, q'_\Pi)}}
        S' \\
  \end{align*}
  \endgroup

  This proof has been mechanically checked.
\end{proof}
Formulating $F^{\max}_{M_{P\Pi}}$ as a higher-order function allows us to calculate $\max(\inactive(M_{P\Pi}))$ using efficient fixpoint algorithms like the one in \cite{Pottier_lazyleast}. Algorithm \ref{alg:total} outlines our proof checking routine. $\textsc{Fix} : ((A \to B) \to (A \to B)) \to (A \to B)$ is a procedure that computes the least fixpoint of its input. The algorithm simply computes the fixpoint of the function $F^{\max}_{M_{P\Pi}}$ as defined in Lemma \ref{lem:fmax-def}, which is a compact representation of $\inactive(M_{P\Pi})$ and checks if the start state of $M_{P\Pi}$ is in it.

\begin{algorithm}[h]
  \DontPrintSemicolon
  \SetAlgoLined
  \SetKwFor{ForAll}{for}{}{}
  \SetKwFunction{Check}{Check}
  \SetKwFunction{FMax}{FMax}
  \SetKwFunction{Fix}{Fix}
  \SetKwFunction{X}{X}
  \SetKwIF{If}{ElseIf}{Else}{if}{}{elif}{else:}{}
  \SetKwProg{Function}{function}{}{}
  \Function{\Check{$A_P$, $A_\Pi$, $D$}}{
    $(Q_P, \Sigma, \delta_P, q_{0P}, F_P) \gets A_P$\;
    $(Q_\Pi, \Sigma, \delta_\Pi, q_{0\Pi}, F_\Pi) \gets A_\Pi$\;
    \Function{\FMax{\FuncSty{X)(}$(q_P, \bot, q_\Pi)$}}{
      \If{$q_P \in F_P \land q_\Pi \notin F_\Pi$}{
        \Return{$\{\stmt\}$}\;}

      $X^\sqcap \gets \{\stmt\}$\;
      \ForAll{$R \in \linear{\stmt}$}{
        $X^\sqcup \gets \emptyset$\;
        \ForAll{$a \in \stmt$, $S \in$ \X{$(\delta_P(q_P, a), \bot, \delta_\Pi(q_\Pi, a))$}}{
          \If{$R(a) \setminus D(a) \subseteq S$}{
            $X^\sqcup \gets X^\sqcup \sqcup \{ (S \cup D(a)) \setminus \{ a \} \}$\;}}
        $X^\sqcap \gets X^\sqcap \sqcap X^\sqcup$\;}
      \Return{$X^\sqcap$}\;}
    \Return{\Fix{\FuncSty{FMax)(}\ArgSty{$(q_{0P}, \bot, q_{0\Pi})$}} $\ne \emptyset$}\;}
   \caption{Proof checking algorithm}
  \label{alg:total}
\end{algorithm}

\paragraph{\bfseries Complexity.} Antichain methods do not generally improve worst case time complexity, as the size of the largest antichain in $\pow{\stmt}$ is exponential in $|\stmt|$. Therefore, our fixpoint algorithm can perform up to $|Q_P| |Q_\Pi| 2^{|\stmt|}$ iterations in the worst case. At each iteration of the fixpoint algorithm,  $F^{\max}(X)(q_P, \bot, q_\Pi)$ must be recalculated for each $q_P \in Q_P$ and $q_\Pi \in Q_\Pi$, where $X : Q_P \times \bot \times Q_\Pi \to \pow{\pow{\stmt}}$ is the current assignment calculated by the fixpoint algorithm.

To analyze the complexity of $F^{\max}$, first, note that antichain meet ($\sqcap$) and join ($\sqcup$) can be computed in $\mathcal{O}((n_1n_2)^2)$ and $\mathcal{O}(n_1n_2)$ time, respectively, where $n_1$ and $n_2$ are the cardinalities of the left and right arguments, respectively. The complexities of iterated antichain meet ($\bigsqcap$) and join ($\bigsqcup$) over a set of $n$ elements with cardinality at most $m$ are therefore
\[ \mathcal{O}((mm)^2 + (m^2m)^2 + \cdots + (m^{n - 1}m)^2) = \mathcal{O}\left(m^2\frac{1 - m^{2n}}{1 - m^2}\right) = \mathcal{O}(m^{2n}) \]
and
\[ \mathcal{O}(mm + (2m)m + \cdots + ((n - 1)m)m) = \mathcal{O}(n^2m^2) \]
respectively.

The inner join of $F^{\max}$ is over at most $k = |\stmt| |Q_P| |Q_\Pi| 2^{|\stmt|}$ antichains of size $\le 1$, and is therefore computed in $\mathcal{O}(k^2)$ and computes an antichain with size $k$. The outer meet is over $|\stmt|!$ antichains produced by the inner join, and thus has complexity $\mathcal{O}(k^{2|\stmt|!})$. Therefore $F^{\max}$ can be computed in $\mathcal{O}(k^{2|\stmt|!}k^2) = \mathcal{O}(k^{2|\stmt|! + 2})$. This brings the total complexity of our fixpoint algorithm to $\mathcal{O}(|Q_P|^2 |Q_\Pi|^2 2^{|\stmt|} k^{2|\stmt|! + 2})$. Since $|\Sigma|$ is typically linear in the program size, we simplify this to $\mathcal{O}(|P| |\Pi| (|P|^2 |\Pi| 2^{|P|})^{2|P|! + 3})$, where $|P|$ is the program size and $|\Pi|$ is the proof size.

\paragraph{\bfseries Counterexamples.}
Theorem \ref{thm:finite-counterexamples} states that a finite set of counterexamples exists whenever $\exists P' \in \reduce_D(P) \ldotp P' \subseteq \lang{\Pi}$ does not hold. The proof of emptiness for an LTA, formed using rule \ref{rule:inactive} above, is a finite tree. Each edge in the tree is labelled by an element of $\stmt$ (obtained from the existential in the rule) and the paths through this tree form the counterexample set. To compute this set, then, it suffices to remember enough information during the computation of $\inactive(M)$ to reconstruct the proof tree. Every time a state $q$ is determined to be inactive, we must also record the witness $a \in \stmt$ for each transition $(q, B, \sigma) \in \Delta$ such that $\sigma(a) \in \inactive(M)$.

In an antichain-based algorithm, once we determine a state $q$ to be inactive, we simultaneously determine everything it subsumes (i.e. $\sqsubseteq q$) to be inactive as well. If we record unique witnesses for each and every state that $q$ subsumes, then the space complexity of our antichain algorithm will be the same as the unoptimized version. The following lemma states that it is sufficient to record witnesses only for $q$ and discard witnesses for states that $q$ subsumes.

\begin{lemma}\label{lem:trans-bij}
  Fix some states $q, q'$ such that $q' \sqsubseteq_{P\Pi} q$. A witness used to prove $q$ is inactive can also be used to prove $q'$ is inactive.
\end{lemma}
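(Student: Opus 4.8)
The plan is to show literally that the witness recorded for $q$ can be re-used, unchanged, for $q'$. Recall that, by rule \ref{rule:inactive}, a witness that $q \in \inactive(M_{P\Pi})$ is a choice, for every transition $(q, B, \sigma) \in \Delta_{P\Pi}$, of a letter $a \in \stmt$ with $\sigma(a) \in \inactive(M_{P\Pi})$. Since $\sqsubseteq_{P\Pi}$ only relates states that agree on their $Q_P$- and $Q_\Pi$-components and have $\iota = \bot$, we may write $q = ((q_P, \bot, S'), q_\Pi)$ and $q' = ((q_P, \bot, S), q_\Pi)$ with $S \subseteq S'$, and the goal is to exhibit a witness for $q'$.

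First I would dispose of the degenerate case $q_P \in F_P \land q_\Pi \notin F_\Pi$: here the $M_D$-component (Theorem \ref{thm:sleep-reduction}) forces the label $B = q_P \in F_P = \top$, while the $M_{\pow{\lang{\Pi}}}$-component (Lemma \ref{lem:lta-powerset}) admits no transition with label $\top$, so in the intersection (Lemma \ref{lem:lta-intersect}) neither $q$ nor $q'$ has any outgoing transition and both are inactive vacuously, with an empty witness. Otherwise, inspecting the expanded forms of $\Delta_D$, $\Delta_{\pow{\lang{\Pi}}}$ and $\Delta_\cap$ used in the proof of Lemma \ref{lem:preserve-dc}, the transitions out of $q$ and out of $q'$ are both indexed by $R \in \linear{\stmt}$ and carry the same label $B = q_P \in F_P$; thus a witness is in either case a function $w : \linear{\stmt} \to \stmt$, and I will argue that the $w$ witnessing $q$ also witnesses $q'$.

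For the core step, fix $R \in \linear{\stmt}$ and set $a = w(R)$. Because $w$ witnesses inactivity at $q$, the successor $\bigl((\delta_P(q_P,a),\, \bot \lor a \in S',\, (S' \cup R(a)) \setminus D(a)),\, \delta_\Pi(q_\Pi, a)\bigr)$ lies in $\inactive(M_{P\Pi})$; by the contrapositive of Lemma \ref{lem:keep-active} its $\iota$-component is $\bot$, i.e. $a \notin S'$, hence $a \notin S$ as $S \subseteq S'$. Therefore the $R$-successor of $q'$ under $a$ is $\bigl((\delta_P(q_P,a),\, \bot,\, (S \cup R(a)) \setminus D(a)),\, \delta_\Pi(q_\Pi,a)\bigr)$, which is $\sqsubseteq_{P\Pi}$ the corresponding successor of $q$ since $(S \cup R(a)) \setminus D(a) \subseteq (S' \cup R(a)) \setminus D(a)$. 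Now $\inactive(M_{P\Pi})$ is the least fixed point of $F_{M_{P\Pi}}$ reached from the downward-closed set $\emptyset$, and $F_{M_{P\Pi}}$ preserves downward-closedness with respect to $\sqsubseteq_{P\Pi}$ (Lemma \ref{lem:preserve-dc}), so $\inactive(M_{P\Pi})$ is downward-closed; hence the $R$-successor of $q'$ under $a$ is inactive as well. Since $R$ was arbitrary, rule \ref{rule:inactive} yields $q' \in \inactive(M_{P\Pi})$ with $w$ as witness.

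The only delicate point — the expected obstacle — is the interaction between the $\bool$ (``ignored'') component and the subsumption order: lowering the sleep set from $S'$ to $S$ could a priori enable an $a$-transition into a state with $\iota = \top$, which is always active by Lemma \ref{lem:keep-active} and would defeat the argument. The resolution is the observation above that the recorded witness already avoids every such letter at $q$ (again by Lemma \ref{lem:keep-active}), and avoiding $S'$ implies avoiding $S$; the remainder is routine monotonicity of the sleep-set update $(S \cup R(a)) \setminus D(a)$ in $S$ together with downward-closedness of $\inactive(M_{P\Pi})$.
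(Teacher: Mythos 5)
Your proof is correct and follows essentially the same route as the paper's: treat the witness as a function $\linear{\stmt} \to \stmt$, observe that $a \notin S'$ implies $a \notin S$, and use monotonicity of the sleep-set update $(S \cup R(a)) \setminus D(a)$ in $S$ together with downward-closedness of $\inactive(M_{P\Pi})$ (via Lemma \ref{lem:preserve-dc}) to transfer inactivity of the successors. You merely make explicit two steps the paper leaves implicit --- the vacuous case $q_P \in F_P \land q_\Pi \notin F_\Pi$ and the appeal to downward-closedness of the least fixed point --- and your assignment of the sleep sets to $q$ and $q'$ is the one consistent with the lemma statement (the paper's own proof silently swaps the two names).
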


\begin{proof}
  From the proof of Lemma \ref{lem:preserve-dc} we have
  \begin{align*}
    &\ F_{M_{P\Pi}}(X) \\
   =&\ \{ \begin{aligned}[t] & ((q_P, \bot, S), q_\Pi) \mid \\
        & \forall R \in \linear{R} \ldotp (q_P \in F_P \implies q_\Pi \in F_\Pi) \implies  \\
        & \exists a \notin S \ldotp ((q_P', \bot, (S \cup R(a)) \setminus D(a)), q_\Pi') \in X \}
          \end{aligned}
  \end{align*}
  where
  \begin{align*}
    q_P' &= \delta_P(q_P, a) & q_\Pi' &= \delta_\Pi(q_\Pi, a)
  \end{align*}
  Thus, if we have states $q' = ((q_P, \bot, S'), q_\Pi)$ and $q = ((q_P, \bot, S), q_\Pi)$ with $S \subseteq S'$ such that $q' \in \inactive(M_{P\Pi})$ (and therefore $q \in \inactive(M_{P\Pi})$), our witness for $q'$ is a function $f : \linear{R} \to \stmt$ such that
  \begin{align*}
    & \forall R \in \linear{R} \ldotp (q_P \in F_P \implies q_\Pi \in F_\Pi) \implies \\
    & f(R) \notin S' \land ((q_P', \bot, (S' \cup R(a)) \setminus D(a)), q_\Pi') \in \inactive(M_{P\Pi})
  \end{align*}
  We must show
  \begin{align*}
    & \forall R \in \linear{R} \ldotp (q_P \in F_P \implies q_\Pi \in F_\Pi) \implies \\
    & f(R) \notin S \land ((q_P', \bot, (S \cup R(a)) \setminus D(a)), q_\Pi') \in \inactive(M_{P\Pi})
  \end{align*}
  which is indeed the case since $f(R) \notin S' \implies f(R) \notin S$ and $(S \cup R(a)) \setminus D(a) \subseteq (S' \cup R(a)) \setminus D(a)$.
\end{proof}

Note that this means that the antichain algorithm soundly returns potentially fewer counterexamples than the original one.

\subsection{Partition Optimization} \label{alg:partition}

The LTA construction for $\reduce_D(P)$ involves a nondeterministic choice of linear order at each state. Since $|\linear{\stmt}|$ has size $|\stmt|!$, each state in the automaton would have a large number of transitions. As an optimization, our algorithm  selects ordering relations out of $\partition{\stmt}$ (instead of $\linear{\stmt}$), defined as $\partition{\stmt} = \{ \stmt_1 \times \stmt_2 \mid \stmt_1 \uplus \stmt_2 = \stmt \}$ where $\uplus$ is disjoint union. This leads to a sound algorithm which is not complete with respect to sleep set reductions and trades the factorial complexity of computing $\linear{\stmt}$ for an exponential one.

\section{Experiments}
\label{sec:eval}
\label{sec:imp}
\subsection{Implementation}

To evaluate our approach, we have implemented our algorithm in a tool called \tool written in Haskell. \tool accepts a program written in a simple imperative language as input, where the property is already encoded in the program in the form of {\em assume} statements,  and attempts to prove the program correct. The dependence relation for each input program is computed using a heuristic that ensures $\sim_D$-closedness. It is based on the fact that the shuffle product (i.e. parallel composition) of two $\sim_D$-closed languages is $\sim_D$-closed.

\tool employs two verification algorithms: (1) The total order algorithm presented in Algorithm \ref{alg:total}, and (2) the variation with the partition optimization discussed in Section \ref{alg:partition}. It also implements multiple counterexample generation algorithms: (1) \emph{Naive:} selects the first counterexample in the difference of the program and proof language. (2) \emph{Progress-Ensuring:} selects a set of counterexamples satisfying Theorem \ref{thm:finite-counterexamples}. (3) \emph{Bounded Progress-Ensuring:} selects a limited subset of the set computed by the progress-ensuring algorithm. We experimented with multiple strategies for the selection of the limited subset:
\begin{itemize}
  \item \emph{RR:} selects a lockstep trace of the program. In other words, it selects a single counterexample by choosing successive statements from each thread in a round robin fashion, e.g. the first statement is from thread 1, the second statement is from thread 2, and so on.
  \item \emph{Ln:} selects the $n$ leftmost counterexamples from the tree of counterexamples. When $n = 1$, this counterexample strategy effectively chooses sequential composition traces of the program.
  \item \emph{Mn:} selects the $n$ middlemost counterexamples from the tree of counterexamples. The middlemost counterexamples do not always contain lockstep counterexamples, so this strategy is distinct from the RR strategy.
\end{itemize}
Fixing a counterexample strategy does not mean committing to a particular reduction. For example, it is possible for assertions learned from a sequential composition trace to generalize to lockstep traces. In some of our benchmarks, we found that our algorithm converged using the L1 strategy, despite the fact that no full proof exists for the sequential composition reduction.

Our experimentation demonstrated that in all cases, the bounded progress ensuring algorithm (an instance of the partition algorithm) is the fastest of all options. Therefore, all our reports in this section are using this instance of the algorithm.

For the larger benchmarks, we use a simple sound optimization to reduce the proof size. We declare the basic blocks of code as atomic, so that internal assertions need not be generated for them as part of the proof. This optimization is incomplete with respect to sleep set reductions. \tool and all our benchmarks can be found at {\tt \url{github.com/weaver-verifier/weaver}}.



\subsection{Benchmarks}

We use a set of sequential benchmarks from \cite{SousaD16} and include  additional sequential benchmarks that involve more interesting reductions in their proofs.
We also have a set of parallel benchmarks, which are beyond the scope of previous hypersafety verification techniques. We use these benchmarks to demonstrate that our technique/tool can seamlessly handle concurrency. These involve proving concurrency specific hypersafety properties such as determinism and equivalence of parallel and sequential implementations of algorithms. Finally, since the proof checking algorithm is the core contribution of this paper,  we have a contrived set of instances to stress test our proof checking algorithm. These involve proving determinism of simple parallel-disjoint programs with various numbers of threads and statements per thread. These benchmarks have been designed to cause a combinatorial explosion for the proof checker and counterexample generation routines. The set of hypersafety properties used in all our experimental results are depicted in Figure \ref{fig:properties}.

\begin{figure}[htbp]
\begin{center}
\resizebox{\columnwidth}{!}{%
\begin{tabular}{|l|c|c|}
\hline
Property & Formula & Description \\
\hline
\textsc{CompSymm} & $\forall a, b, c \ldotp \operatorname{sign}(P(a, b)) = -\operatorname{sign}(P(b, a))$ & Comparator symmetry \\
\textsc{CompTrans} & $\forall a, b, c \ldotp P(a, b) > 0 \land P(b, c) > 0 \implies P(a, c) > 0$ & Comparator transitivity \\
\textsc{CompSubst} & $\forall a, b \ldotp P(a, b) = 0 \implies \operatorname{sign}(P(a, c)) = \operatorname{sign}(P(b, c))$ & Comparator  resp. equality \\
\textsc{Symm} & $\forall a, b \ldotp P(a, b) \implies P(b, a)$ & Symmetry \\
\textsc{Trans} & $\forall a, b, c \ldotp P(a, b) \land P(b, c) \implies P(a, c)$ & Transitivity \\
\textsc{Dist} & $\forall a, b, c \ldotp P(a + b, c) = P(a, c) + P(b, c)$ & Distributivity \\
\textsc{Sec} & $\forall h_1, h_2, l \ldotp P(h_1, l) = P(h_2, l)$ & Information flow security \\
\textsc{Det} & $\forall x, x'_1, x'_2 \ldotp (x, x'_1) \in P \land (x, x'_2) \in P \implies x'_1 = x'_2$ & Determinism \\
\textsc{Equiv} & $\forall a \ldotp P_1(a) = P_2(a)$ & Equivalence \\
\hline
\end{tabular}}
\caption{Hypersafety properties for our experiments. In the case of \textsc{Det}, $P$ is a relation between inputs and outputs rather than a function.}
\label{fig:properties}
\end{center}
\end{figure}

\paragraph{\bfseries Benchmarks from \cite{SousaD16}.}  Each example implements a comparator function that returns a negative number if the first argument is less than the second, a positive number if the first argument is greater than the second, and zero if they are equal. We check that each comparator satisfies \textsc{CompSubst}, \textsc{CompTrans}, and \textsc{CompSymm}.

\paragraph{\bfseries Our Sequential Benchmarks.} We verify the \textsc{Mult} example given in Section \ref{sec:example} satisfies \textsc{Dist} (i.e. $\textsc{Mult}(a + b, c) = \textsc{Mult}(a, c) + \textsc{Mult}(b, c)$). Since $\textsc{Mult}$ iterates on its first argument only, we also verify that the property holds when its arguments are flipped (i.e. $\textsc{Mult}(c, a + b) = \textsc{Mult}(c, a) + \textsc{Mult}(c, b)$). We also verify that an array equality procedure \textsc{ArrayEq} satisfies symmetry and transitivity, and include a simple information flow security example. Lastly, we include examples of loop unrolling: each \textsc{UnrollN} example involves a loop iterating a multiple of \textsc{N} times, and each \textsc{UnrollCondN} example unrolls a loop \textsc{N} times, with cleanup code for extra iterations.

\paragraph{\bfseries Our Parallel Benchmarks.} We use the following parallel benchmarks:
\begin{itemize}
\item The \textsc{Barrier} example checks that a simple loop-free barrier computation is deterministic.
\item The \textsc{Lamport} example verifies the correctness of a locking algorithm by checking whether executing a non-atomic operation in two threads within the confines of the lock is equivalent to executing the operation twice sequentially. The lock is implemented using Lamport's bakery algorithm.

%

\item The \textsc{ParallelSum} examples implement parallel summation over a queue. The \textsc{ParallelSum1} example has two threads atomically sum into a shared variable, while each thread in the \textsc{ParallelSum2} example sums into a local variable. We verify that they are deterministic. We also verify that \textsc{ParallelSum1} is equivalent to a single-threaded version of the same program.


\item The \textsc{SimpleInc} example verifies that atomically incrementing an integer in two different threads is equivalent to non-atomically incrementing it twice in a single thread. The \textsc{Spaghetti} example verifies the determinism of a program that performs an arbitrary computation before setting its output to a fixed value.

\item The \textsc{ExpNxM} benchmarks verify that a program computing an exponential term is deterministic. Each program is replicated over \textsc{N} threads, and each thread contains \textsc{M} statements.
\end{itemize}

\subsection{Evaluation}

Detailed results of our experiments are included in Appendix \ref{app:results}. Table \ref{tab:1} includes a summary in the form of averages, and here, we discuss our top findings.

\begin{table}[t]
\begin{center}
\begin{tabular}{|l|c|c|c|c|c|c|}\hline
\multirow{3}{*}{\shortstack{Benchmark Group}} &
\multirow{3}{*}{\shortstack{Group\\Count}} &
\multirow{3}{*}{\shortstack{Proof\\Size}} &
\multirow{3}{*}{\shortstack{Number of\\Refinement\\Rounds}} &
\multirow{3}{*}{\shortstack{Proof\\Construction\\Time}} &
\multirow{3}{*}{\shortstack{Proof\\Checking\\Time}} &
\multirow{3}{*}{\shortstack{Total\\Time}} \\
& & & & & & \\
& & & & & & \\ \hline \hline
\multirow{3}{*}{\shortstack{
Looping programs of \cite{SousaD16}\\ 2-safety properties}} & \multirow{3}{*}{5} & \multirow{3}{*}{63} & \multirow{3}{*}{12} & \multirow{3}{*}{46.69s} & \multirow{3}{*}{0.1s} & \multirow{3}{*}{47.03s} \\
& & & & & & \\
& & & & & & \\ \hline
\multirow{3}{*}{\shortstack{
Looping programs of \cite{SousaD16}\\ 3-safety properties}} & \multirow{3}{*}{8} & \multirow{3}{*}{155} & \multirow{3}{*}{22} & \multirow{3}{*}{475.78s} & \multirow{3}{*}{11.79s} & \multirow{3}{*}{448.36s} \\
& & & & & & \\
& & & & & & \\ \hline

Loop-free programs of \cite{SousaD16} & 27 & 5 & 2 & 0.13s & 0.0004s & 0.15s \\ \hline
Our sequential benchmarks & 13 & 30 & 9 & 14.27s & 2.5s & 17.94s \\ \hline
Our parallel benchmarks& 7 & 31 & 8 & 17.95 & 0.56s & 18.63s \\ \hline
\end{tabular} 
\caption{Experimental results averages for benchmark groups. \label{tab:1}}
\end{center}
\vspace{-30pt}
\end{table}%


{\bf Proof construction time} refers to the time spent to construct $\lang{\Pi}$ from a given set of assertions $\Pi$ and excludes the time to produce proofs for the counterexamples in a given round.  {\bf Proof checking time} is the time spent to check if the current proof candidate  is strong enough for a reduction of the program. In the fastest instances (total time around $0.01$ seconds), roughly equal time is spent in proof checking and proof construction. In the slowest instances, the total time is almost entirely spent in proof construction. In contrast, in our stress tests (designed to stress the proof checking algorithm) the majority of the time is spent in proof checking. The time spent in proving counterexamples correct is negligible in all instances.  {\bf Proof sizes} vary from 4 assertions to 298 for the most complicated instance. Verification times are {\em correlated} with the final proof size; larger proofs tend to cause longer verification times.

{\bf Numbers of refinement rounds} vary from 2 for the simplest to 33 for the most complicated instance. A small number of refinement rounds (e.g. 2) implies a fast verification time. But, for the  higher number of rounds, a strong positive correlation between the number of rounds and verification time does not exist.

The bounded progress ensuring algorithm achieves its  best time, in most cases, by taking 1 counterexample at each refinement round. There are two exceptions which respectively require 5 counterexamples from the left (the parallel sum) and middle (transitivity of integer array comparator) per refinement round for the best overall verification time. The former is negligible because the 1 counterexample variation produces a runtime within 12\% of this optimal answer. The latter is significant, since other instances take substantially more time to verify.

In the vast majority of the cases, choosing a counterexample from the left or the middle makes a negligible difference in the verification time. The exceptions are the hardest sequential instance and the hardest parallel one. In these two cases, a left choice leads to a timeout while the middle one succeeds.

To gauge the effect of the counterexample selection strategy on the overall performance of our approach, we used all three counterexample selection modes on all our benchmarks: RR, L$n$, and M$n$ , for $n \in \{1, 5, 10\}$. In most cases, the fastest configuration is 1-2x faster than the next fastest. For the more complicated benchmarks, the difference is much larger; for example, \tool converges 13x faster using the RR compared to M$1$ counterexample mode for the substitutivity of the integer array comparator.

Our conclusion is that in the cases that the proof is complicated and can grow large, the choice of counterexamples can have a big impact on verification time. One can simply run a few  instantiations of the algorithm with different counterexample choices in parallel and wait for the best one to return.

For our {\bfseries parallel programs} benchmarks  (other than our stress tests), the tool spends the majority of its time in proof construction. Therefore, we designed specific (unusual) parallel programs to stress test the proof checker. {\bf Stress test} benchmarks are trivial tests of determinism of disjoint parallel programs, which can be proven correct easily by using the atomic block optimization. However, we force the tool to do the unnecessary hard work. These instances simulate the worst case theoretical complexity where the proof checking time and number of counterexamples grow exponentially with the number of threads and the sizes of the threads. In the largest instance, more than 99\% of the total verification time is spent in proof checking. Averages are not very informative for these instances, and therefore are not included in Table \ref{tab:1}.

Finally, \tool is only slow for verifying 3-safety properties of large looping benchmarks from \cite{SousaD16}. Note that unlike the approach in \cite{SousaD16}, which starts from a default lockstep reduction (that is incidentally sufficient to prove these instances), we do not assume any reduction and consider them all. The extra time is therefore expected when the product programs become quite large.

{\bfseries Antichains.} To measure the effectiveness of our antichain optimization, we compare the times for proof checking of the final (correct) proof for all our benchmarks with and without the antichain optimization. We discard the cases where the proof checking time is under 0.01s without optimization as the times are too small to be statistically relevant. The antichain-based algorithm outperforms the unoptimized one in all these instances. Of the remaining (27 benchamrks), the minimum speedup is a factor of 2, and the maximum is a factor of 44 (not considering the one instance that times out without the optimization). The average speedup is 12, and larger speedups (on average) are  observed for the parallel benchmarks compared to the sequential ones.

%

%
%
%
%


\section{Related Work}

The notion of a $k$-safety hyperproperty was introduced in \cite{ClarksonS08} without consideration for automatic program verification.
The approach of reducing $k$-safety to 1-safety by self-composition is introduced in \cite{BartheDR11}. While theoretically complete, self-composition is not practical as discussed in Section \ref{sec:intro}.
Product programs generalize the self-composition approach and have been used in verifying translation validation \cite{PnueliSS98}, non-interference \cite{GoguenM82a,SabelfeldM03}, and program optimization \cite{SousaDVDG14}. A product of two programs $P_1$ and $P_2$ is semantically equivalent to $P_1 \cdot P_2$ (sequential composition), but is made easier to verify by allowing parts of each program to be interleaved. The product programs proposed in \cite{BartheCK11} allow lockstep interleaving exclusively, but only when the control structures of $P_1$ and $P_2$ match. This restriction is lifted in \cite{BartheCK13} to allow some non-lockstep interleavings. However, the given construction rules are non-deterministic, and the choice of product program is left to the user or a heuristic.

Relational program logics \cite{Benton04,Yang07} extend traditional program logics to allow reasoning about relational program properties, however automation is usually not addressed. Automatic construction of product programs is discussed in \cite{EilersMH18} with the goal of supporting procedure specifications and modular reasoning, but is also restricted to lockstep interleavings. Our approach does not support procedure calls but is fully automated and permits non-lockstep interleavings.

The key feature of our approached is the automation of the discovery of an appropriate program reduction and a proof combined. In this case, the only other method that compares is the one based on Cartesian Hoare Logic (CHL) proposed in \cite{SousaD16} along with an algorithm for automatic verification based on CHL. Their proposed algorithm implicitly constructs a product program, using a heuristic that favours lockstep executions as much as possible, and then prioritizes certain rules of the logic over the rest. The heuristic nature of the search for the proof means that no characterization of the search space can be given, and no guarantees about whether an appropriate product program will be found. In contrast, we have a formal characterization of the set of explored product programs in this paper. Moreover, CHL was not designed to deal with concurrency.

Lipton \cite{Lipton75} first proposed reduction as a way to simplify reasoning about concurrent programs. His ideas have been employed in a semi-automatic setting in \cite{ElmasQT09}. Partial-order reduction (POR) is a class of techniques that reduces the state space of search by removing redundant paths. POR techniques are concerned with finding a single (preferably minimal) reduction of the input program. In contrast, we use the same underlying ideas to explore many program reductions simultaenously. The class of reductions described in Section \ref{sec:sleep} is based on the sleep set technique of Godefroid \cite{Godefroid96}. Other techniques exist \cite{Godefroid96,AbdullaAJS17} that are used in conjunction with sleep sets to achieve minimality in a normal POR setting. In our setting, reductions generated by sleep sets are already optimal (Theorem \ref{thm:sleep-optimal}). However, employing these additional POR techniques may propose ways of optimizing our proof checking algorithm by producing a smaller reduction LTA.

\bibliographystyle{splncs04}
\bibliography{cav19}
\appendix

\newpage

\section{Results}\label{app:results}
Benchmarks were run on a Proliant DL980 G7 with eight eight-core Intel X6550 processors (64 cores, 64 threads) and 256G of RAM, running 64-bit Ubuntu.

\begin{table}[ht]
\resizebox{\textwidth}{!}{\begin{tabular}{|l|c|c|c|c|c|c|c|}
\hline
Benchmark & Property & Algorithm & \shortstack{Proof \\ Size} & \shortstack{Refinement \\ Rounds} & \shortstack{Proof \\ Construction \\ Time} & \shortstack{Proof \\ Checking \\ Time} & \shortstack{Total \\ Time} \\
\hline

\multicolumn{8}{c}{Our Sequential Benchmarks} \\
\hline
\textsc{ArrayEq}     & \textsc{Symm}           & BPE(RR) & 17 & 10 & 1.17s   & 0.16s  & 1.38s \\
\textsc{ArrayEq}     & \textsc{Trans}          & BPE(M5) & 97 & 15 & 156.17s & 30.71s & 201.2s \\
\textsc{Mult}        & \textsc{Dist}           & BPE(M1) & 21 & 11 & 1.34s   & 1.32s  & 2.94s \\
\textsc{Mult}        & \textsc{Dist} (Flipped) & BPE(L1) & 27 & 12 & 2.54s   & 1.38s  & 4s \\
\textsc{Security}    & \textsc{Sec}            & BPE(RR) & 12 & 6  & 0.41s   & 0.052s & 0.49s \\
\textsc{Unroll2}     & \textsc{Equiv}          & BPE(RR) & 21 & 9  & 1.42s   & 0.017s & 1.48s \\
\textsc{Unroll3}     & \textsc{Equiv}          & BPE(RR) & 25 & 9  & 1.99s   & 0.02s  & 2.06s \\
\textsc{Unroll4}     & \textsc{Equiv}          & BPE(L1) & 31 & 11 & 3.31s   & 0.05s  & 3.43s \\
\textsc{Unroll5}     & \textsc{Equiv}          & BPE(L1) & 36 & 12 & 4.87s   & 0.06s  & 5.01s \\
\textsc{UnrollCond2} & \textsc{Equiv}          & BPE(RR) & 22 & 7  & 0.85s   & 0.01s  & 0.88s \\
\textsc{UnrollCond3} & \textsc{Equiv}          & BPE(RR) & 30 & 9  & 2.28s   & 0.02s  & 2.35s \\
\textsc{UnrollCond4} & \textsc{Equiv}          & BPE(L1) & 38 & 9  & 4.7s    & 0.02s  & 4.8s \\
\textsc{UnrollCond5} & \textsc{Equiv}          & BPE(L1) & 45 & 10 & 7.06s   & 0.02s  & 7.18s \\
\hline
\multicolumn{8}{c}{Our Parallel Benchmarks} \\
\hline
\textsc{Barrier}      & \textsc{Det}   & BPE(RR) & 21 & 6  & 4.05s   & 1.98s   & 6.1s \\
\textsc{Lamport}      & \textsc{Equiv} & BPE(RR) & 28 & 8  & 3.24s   & 0.44s   & 3.77s \\
\textsc{ParallelSum1} & \textsc{Equiv} & BPE(M5) & 32 & 7  & 6.01s   & 0.19s   & 6.45s \\
\textsc{ParallelSum1} & \textsc{Det}   & BPE(RR) & 37 & 13 & 10.38s  & 0.22s   & 10.72s \\
\textsc{ParallelSum2} & \textsc{Det}   & BPE(RR) & 86 & 15 & 101.89s & 1.07s   & 103.23s \\
\textsc{SimpleInc}    & \textsc{Equiv} & BPE(M1) & 7  & 2  & 0.05s   & 0.0008s & 0.06s \\
\textsc{Spaghetti}    & \textsc{Det}   & BPE(RR) & 4  & 2  & 0.03s   & 0.02s   & 0.05s \\
\hline
\multicolumn{8}{c}{Stress Tests} \\
\hline
\textsc{Exp1x3} & \textsc{Det} & BPE(L1) & 10 & 5  & 0.13s & 0.005s   & 0.15s \\
\textsc{Exp2x3} & \textsc{Det} & BPE(L1) & 17 & 8  & 0.88s & 3.03s    & 3.95s \\
\textsc{Exp2x4} & \textsc{Det} & BPE(RR) & 18 & 8  & 1.11s & 4.23s    & 5.39s \\
\textsc{Exp2x6} & \textsc{Det} & BPE(RR) & 19 & 8  & 1.32s & 8.26s    & 9.65s \\
\textsc{Exp2x9} & \textsc{Det} & BPE(RR) & 19 & 8  & 1.34s & 19.19s   & 20.61s \\
\textsc{Exp3x3} & \textsc{Det} & BPE(L5) & 26 & 11 & 3.77s & 1672.05s & 1676.61s \\
\hline
\end{tabular}}
\caption{The detailed results of the winning algorithm performed on our benchmarks. TO = total-order, P = partition, N = naive, BPE(?n) = bounded-progress-ensuring (with $n$ counterexamples from left (L), middle (M), or round-robin (RR)).}
\end{table}

\begin{table}[hbt!]
\resizebox{\textwidth}{!}{\begin{tabular}{|l|c|c|c|c|c|c|c|}
\hline
Benchmark & Property & Algorithm & \shortstack{Proof \\ Size} & \shortstack{Refinement \\ Rounds} & \shortstack{Proof \\ Construction \\ Time} & \shortstack{Proof \\ Checking \\ Time} & \shortstack{Total \\ Time} \\
\hline
\hline
\textsc{ArrayInt}       & \textsc{CompSubst} & BPE(RR) & 146 & 20 & 517.33s  & 5.22    & 523.82 \\
\textsc{ArrayInt}       & \textsc{CompSymm}  & BPE(RR) & 29  & 8  & 6.21s    & 0.02s   & 6.36s \\
\textsc{ArrayInt}       & \textsc{CompTrans} & BPE(RR) & 148 & 21 & 390.15s  & 2.27s   & 393.66s \\
\textsc{Chromosome}     & \textsc{CompSubst} & BPE(L1) & 136 & 18 & 211.34s  & 11.39s  & 223.58s \\
\textsc{Chromosome}     & \textsc{CompSymm}  & BPE(RR) & 99  & 16 & 61.68s   & 0.15s   & 62.2s \\
\textsc{Chromosome}     & \textsc{CompTrans} & BPE(L1) & 179 & 23 & 274s     & 16.56s  & 291.22s \\
\textsc{Collitem}       & \textsc{CompSubst} & BPE(M1) & 5   & 2  & 0.5s     & 0.0004s & 0.55s \\
\textsc{Collitem}       & \textsc{CompSymm}  & BPE(M1) & 4   & 2  & 0.09s    & 0.0003s & 0.12s \\
\textsc{Collitem}       & \textsc{CompTrans} & BPE(M1) & 5   & 2  & 0.21s    & 0.0003s & 0.24s \\
\textsc{Container}      & \textsc{CompSubst} & BPE(M1) & 5   & 2  & 0.39s    & 0.0004s & 0.43s \\
\textsc{Container}      & \textsc{CompSymm}  & BPE(M1) & 4   & 2  & 0.09s    & 0.001s  & 0.12s \\
\textsc{Container}      & \textsc{CompTrans} & BPE(M1) & 5   & 2  & 0.22s    & 0.0005s & 0.25s \\
\textsc{ExpTerm}        & \textsc{CompSubst} & BPE(L1) & 5   & 2  & 0.16s    & 0.0005s & 0.18s \\
\textsc{ExpTerm}        & \textsc{CompSymm}  & BPE(M1) & 4   & 2  & 0.07s    & 0.0003s & 0.08s \\
\textsc{ExpTerm}        & \textsc{CompTrans} & BPE(M1) & 4   & 2  & 0.08s    & 0.0003s & 0.09s \\
\textsc{FileItem}       & \textsc{CompSubst} & BPE(RR) & 5   & 2  & 0.14s    & 0.0004s & 0.16s \\
\textsc{FileItem}       & \textsc{CompSymm}  & BPE(RR) & 4   & 2  & 0.07s    & 0.0003s & 0.08s \\
\textsc{FileItem}       & \textsc{CompTrans} & BPE(L1) & 5   & 2  & 0.12s    & 0.0004s & 0.13s \\
\textsc{Match}          & \textsc{CompSubst} & BPE(RR) & 5   & 2  & 0.11s    & 0.0004s & 0.12s \\
\textsc{Match}          & \textsc{CompSymm}  & BPE(M1) & 4   & 2  & 0.06s    & 0.0005s & 0.07s \\
\textsc{Match}          & \textsc{CompTrans} & BPE(RR) & 5   & 2  & 0.07s    & 0.0004s & 0.08s \\
\textsc{NameComparator} & \textsc{CompSubst} & BPE(L1) & 75  & 14 & 70.41s   & 0.37s   & 71.03s \\
\textsc{NameComparator} & \textsc{CompSymm}  & BPE(L1) & 49  & 9  & 15.5s    & 0.026s  & 15.66s \\
\textsc{NameComparator} & \textsc{CompTrans} & BPE(L1) & 114 & 18 & 114.88s  & 3.56s   & 118.84s \\
\textsc{Node}           & \textsc{CompSubst} & BPE(L1) & 5   & 2  & 0.27s    & 0.0004s & 0.3s \\
\textsc{Node}           & \textsc{CompSymm}  & BPE(L1) & 4   & 2  & 0.06s    & 0.0002s & 0.08s \\
\textsc{Node}           & \textsc{CompTrans} & BPE(RR) & 5   & 2  & 0.1s     & 0.0004s & 0.11s \\
\textsc{NzbFile}        & \textsc{CompSubst} & BPE(RR) & 298 & 33 & 2026.73s & 28.51s  & 2056.9s \\
\textsc{NzbFile}        & \textsc{CompSymm}  & BPE(RR) & 111 & 19 & 144.96s  & 0.27s   & 145.72s \\
\textsc{NzbFile}        & \textsc{CompTrans} & BPE(RR) & 219 & 29 & 751.71s  & 43.97s  & 796.54s \\
\textsc{SimplStr}       & \textsc{CompSubst} & BPE(RR) & 5   & 2  & 0.12s    & 0.0003s & 0.13s \\
\textsc{SimplStr}       & \textsc{CompSymm}  & BPE(L1) & 4   & 2  & 0.03s    & 0.0002s & 0.04s \\
\textsc{SimplStr}       & \textsc{CompTrans} & BPE(RR) & 5   & 2  & 0.07s    & 0.0003s & 0.07s \\
\textsc{Sponsored}      & \textsc{CompSubst} & BPE(RR) & 5   & 2  & 0.07s    & 0.0003s & 0.08s \\
\textsc{Sponsored}      & \textsc{CompSymm}  & BPE(M1) & 4   & 2  & 0.03s    & 0.0002s & 0.03s \\
\textsc{Sponsored}      & \textsc{CompTrans} & BPE(M1) & 5   & 2  & 0.03s    & 0.0003s & 0.04s \\
\textsc{Time}           & \textsc{CompSubst} & BPE(M1) & 5   & 2  & 0.17s    & 0.0005s & 0.19s \\
\textsc{Time}           & \textsc{CompSymm}  & BPE(M1) & 4   & 2  & 0.03s    & 0.0005s & 0.04s \\
\textsc{Time}           & \textsc{CompTrans} & BPE(M1) & 5   & 2  & 0.07s    & 0.0004s & 0.08s \\
\textsc{Word}           & \textsc{CompSubst} & BPE(RR) & 103 & 18 & 179.29s  & 2.59s   & 182.28s \\
\textsc{Word}           & \textsc{CompSymm}  & BPE(RR) & 26  & 8  & 5.14s    & 0.017s  & 5.22s \\
\textsc{Word}           & \textsc{CompTrans} & BPE(RR) & 129 & 22 & 221.94s  & 3.48s   & 225.78s \\
\hline
\end{tabular}}
\caption{The detailed results of the winning algorithm performed on benchmarks from \cite{SousaD16}. TO = total-order, P = partition, N = naive, BPE(?n) = bounded-progress-ensuring (with $n$ counterexamples from left (L), middle (M), or round-robin (RR))}
\end{table}

\begin{table}[ht]\label{tab:3}
  \resizebox{\textwidth}{!}{\begin{tabular}{|l|c|c|c|c|c|}
\hline
Benchmark & Property & Threads & Proof Size & Optimized Time & Unoptimized Time \\
\hline
\multicolumn{6}{c}{Our Sequential Benchmarks} \\
\hline
\textsc{ArrayEq}     & \textsc{Symm}           & 2 & 17 & 0.03s  & 0.13s \\
\textsc{ArrayEq}     & \textsc{Trans}          & 3 & 97 & 5.99s  & 75.53s \\
\textsc{Mult}        & \textsc{Dist}           & 3 & 21 & 0.42s  & 5.21s \\
\textsc{Mult}        & \textsc{Dist} (Flipped) & 3 & 27 & 0.33s  & 3.11s \\
\textsc{Security}    & \textsc{Sec}            & 2 & 12 & 0.01s  & 0.06s \\
\textsc{Unroll2}     & \textsc{Equiv}          & 2 & 21 & 0.003s & 0.007s \\
\textsc{Unroll3}     & \textsc{Equiv}          & 2 & 25 & 0.005s & 0.01s \\
\textsc{Unroll4}     & \textsc{Equiv}          & 2 & 31 & 0.01s  & 0.05s \\
\textsc{Unroll5}     & \textsc{Equiv}          & 2 & 36 & 0.01s  & 0.05s \\
\textsc{UnrollCond2} & \textsc{Equiv}          & 2 & 22 & 0.002s & 0.003s \\
\textsc{UnrollCond3} & \textsc{Equiv}          & 2 & 30 & 0.003s & 0.005s \\
\textsc{UnrollCond4} & \textsc{Equiv}          & 2 & 38 & 0.007s & 0.02s \\
\textsc{UnrollCond5} & \textsc{Equiv}          & 2 & 45 & 0.006s & 0.01s \\
\hline
\multicolumn{6}{c}{Our Parallel Benchmarks} \\
\hline
\textsc{Barrier}      & \textsc{Det}   & 4 & 21  & 0.47s & 11.46s \\
\textsc{Lamport}      & \textsc{Equiv} & 3 & 28  & 0.09s & 0.36s \\
\textsc{ParallelSum1} & \textsc{Equiv} & 3 & 32  & 0.09s & 1.10s \\
\textsc{ParallelSum1} & \textsc{Det}   & 4 & 37  & 0.12s & 1.17s \\
\textsc{ParallelSum2} & \textsc{Det}   & 4 & 86  & 0.1s & 1.21s \\
\textsc{SimpleInc}    & \textsc{Equiv} & 3 & 7   & 0.0004s & 0.001s \\
\textsc{Spaghetti}    & \textsc{Det}   & 4 & 4   & 0.004s & 2.28s \\
\hline
\multicolumn{6}{c}{Stress Tests} \\
\hline
\textsc{Exp1x3} & \textsc{Det} & 2 & 10 & 0.002s  & 0.005s \\
\textsc{Exp2x3} & \textsc{Det} & 4 & 17 & 1.15s   & 27.2s \\
\textsc{Exp2x4} & \textsc{Det} & 4 & 18 & 1.33s   & 36.009s \\
\textsc{Exp2x6} & \textsc{Det} & 4 & 19 & 2.34s   & 80.98s \\
\textsc{Exp2x9} & \textsc{Det} & 4 & 19 & 5.045s  & 222.45s \\
\textsc{Exp3x3} & \textsc{Det} & 6 & 26 & 910.03s & T/O \\
\hline
\multicolumn{6}{c}{Benchmarks from \cite{SousaD16}} \\
\hline
\textsc{ArrayInt}       & \textsc{CompSubst} & 3 & 146 & 0.54s     & 2.47s \\
\textsc{ArrayInt}       & \textsc{CompSymm}  & 2 & 29  & 0.004s    & 0.005s \\
\textsc{ArrayInt}       & \textsc{CompTrans} & 3 & 148 & 0.13s     & 0.71s \\
\textsc{Chromosome}     & \textsc{CompSubst} & 3 & 136 & 2.81s     & 32.63s \\
\textsc{Chromosome}     & \textsc{CompSymm}  & 2 & 99  & 0.02s     & 0.02s \\
\textsc{Chromosome}     & \textsc{CompTrans} & 3 & 179 & 3.25s     & 24.55s \\
\textsc{NameComparator} & \textsc{CompSubst} & 3 & 75  & 0.02s     & 0.31s \\
\textsc{NameComparator} & \textsc{CompSymm}  & 2 & 49  & 0.004s    & 0.009s \\
\textsc{NameComparator} & \textsc{CompTrans} & 3 & 114 & 0.22s     & 1.79s \\
\textsc{NzbFile}        & \textsc{CompSubst} & 3 & 298 & 6.25s     & 34.38s \\
\textsc{NzbFile}        & \textsc{CompSymm}  & 2 & 111 & 0.041s    & 0.082s \\
\textsc{NzbFile}        & \textsc{CompTrans} & 3 & 219 & 9.39s     & 50.74s \\
\textsc{Word}           & \textsc{CompSubst} & 3 & 103 & 0.42s     & 2.11s \\
\textsc{Word}           & \textsc{CompSymm}  & 2 & 26  & 0.005s    & 0.007s \\
\textsc{Word}           & \textsc{CompTrans} & 3 & 129 & 0.39s     & 2.038s \\
\hline
\end{tabular}}
\caption{The detailed results of the last round of proof checking for all benchmarks, with and without the antichain optimization. The value of the Threads column is the number of threads in the encoded program. Benchmarks whose proofs take less than 0.01 seconds to check are omitted.}
\end{table}

\end{document}